\newif\ifarxiv\arxivtrue
\newif\ifsmallfigs\smallfigsfalse

\documentclass[11pt]{article}
\usepackage{comment}

\usepackage{caption}

\usepackage{subcaption}
\usepackage{csquotes}
\usepackage{lmodern}

\ifarxiv
\usepackage[utf8]{inputenc}
\usepackage{graphicx}
\usepackage{amssymb,amsmath,amsthm,amsfonts}
\usepackage{amsfonts}
\usepackage[usenames,dvipsnames]{xcolor}
\usepackage{soul}
\usepackage{fullpage}
\usepackage{url}
\usepackage{hyperref}
\setlength{\parskip}{.4em}
\setlength{\parindent}{0em}
\usepackage{titlesec}
\titlespacing\section{0pt}{12pt plus 4pt minus 1pt}{0pt plus 2pt minus 1pt}
\titlespacing\subsection{0pt}{12pt plus 4pt minus 1pt}{0pt plus 2pt minus 1pt}
\titlespacing\subsubsection{0pt}{12pt plus 4pt minus 1pt}{0pt plus 2pt minus 1pt}

\else
\usepackage{maa-monthly}
\fi
\usepackage[square, comma, numbers,sort&compress]{natbib}
\bibliographystyle{plainnat}
\usepackage{cleveref}
\usepackage{blindtext}
\usepackage{csquotes}
\usepackage{pgf,tikz,pgfplots}
\pgfplotsset{compat=1.14}
\usepackage{mathrsfs}
\usetikzlibrary{arrows}
\usepackage{tikz-3dplot}

\usepackage[inline]{asymptote}

\newtheorem{claim}{Claim}
\newtheorem{fact}{Fact}
\newtheorem{observation}{Observation}
\newtheorem{lemma}{Lemma}
\newtheorem{theorem}{Theorem}

\newtheorem*{claim-non}{Claim}
\newtheorem*{theorem-non}{Theorem}

\newtheorem{corollary}{Corollary}

\theoremstyle{definition}
\newtheorem{definition}{Definition}
\theoremstyle{remark}

\newcommand{\mute}[1]{}

\newcommand{\mc}[1]{\mathcal{#1}}
\newcommand{\mbb}[1]{\mathbb{#1}}
\newcommand{\R}{\ensuremath{\mathbb{R}}\xspace}

\newcommand{\ssm}{\ensuremath{{\smallsetminus}}\xspace}

\newcommand{\vphi}{\varphi}

\newif \ifdraft \drafttrue

\newcommand{\cit}[1]{\textcolor{Orange}{[citation needed]}}

\title{The Gerrymandering Jumble: Map Projections Permute Districts' Compactness Scores}
\ifarxiv
\author{Assaf Bar-Natan\thanks{University of Toronto, Department of Mathematics. \href{mailto:assaf.barnatan@gmail.com}{\texttt{assaf.barnatan@gmail.com}}} 
	\and Lorenzo Najt\thanks{University of Wisconsin, Madison, Department of Mathematics. \href{mailto:lnajt4@gmail.com}{\texttt{lnajt@math.wisc.edu}}}  
	\and Zachary Schutzman\thanks{University of Pennsylvania, Department of Computer and Information Science. \href{mailto:ianzach@seas.upenn.edu}{\texttt{ianzach@seas.upenn.edu}}}  }
\else
\author{Anonymous Author(s)}
\fi
\begin{document}
\maketitle
\begin{abstract}
\noindentIn political redistricting, the \textit{compactness} of a district is
used as a quantitative proxy for its fairness.  Several
well-established, yet competing, notions of geographic compactness are
commonly used to evaluate the shapes of regions, including the
\textit{Polsby-Popper score}, the \textit{convex hull score}, and the \textit{Reock score}, and
these scores are used to compare two or more districts or plans.  In
this paper, we prove mathematically that  any \textit{map
projection} from the sphere to the plane reverses the ordering of the scores of some pair of regions for all three of these scores.  Empirically, we demonstrate that the effect of using the Cartesian latitude-longitude projection on the order of Reock scores is quite dramatic.

\end{abstract}
\ifarxiv
\else
\fi

\section{Introduction}
Striving for the \textit{geographic compactness} of electoral
districts is a traditional principle of redistricting \cite{altman_1998}, and, to that
end, many jurisdictions have included the criterion of compactness in
their legal code for drawing districts.  Some of these include Iowa's\footnote{Iowa Code \S42.4(4)} measuring the perimeter of districts, Maine's\footnote{Maine Statute \S1206-A} minimizing travel time within a district, and Idaho's\footnote{Idaho Statute 72-1506(4)} avoiding 
drawing districts which are \enquote*{oddly shaped}.  Such measures can vary widely in their 
precision, both mathematical and otherwise.  Computing the perimeter of districts is a very clear definition, minimizing travel time is less so, and what makes a district oddly shaped or not seems rather challenging to consider from a rigorous standpoint. 

While a strict definition of when a district is or is not \enquote*{compact} is quite elusive, the purpose of such a criterion is much easier to articulate.  Simply put, a district which is bizarrely shaped, such as one with small tendrils grabbing many distant chunks of territory,
probably wasn't drawn like that by accident. Such a shape need not be drawn for nefarious
purposes, but its unusual nature should trigger closer scrutiny.  Measures to compute the
geographic compactness of districts are intended to formalize this quality of \enquote*{bizarreness}
mathematically.  We briefly note here that the term \textit{compactness} is somewhat 
overloaded, and that we exclusively use the term to refer to the shape of geographic 
regions and not to the topological definition of the word.

People have formally studied geographic compactness for nearly two hundred years, and, over that period, scientists and legal scholars have developed many formulas to assign a numerical measure of \enquote*{compactness} to a region such as an electoral district \cite{young_compactness}.
 Three of the most commonly discussed formulations are the \textit{Polsby-Popper score}, which
measures the normalized ratio of a district's area to the square of its
perimeter, the \textit{convex hull score}, which measures the ratio
of the area of a district to the smallest convex region containing it,
and the \textit{Reock score}, which measures the ratio of the area of
a district to the area of the smallest circular disc containing it.  Each of these
measures is appealing at an intuitive level, since they assign to
a district a single scalar value between zero and one, which presents a simple 
method to compare the relative compactness of two or more districts. 
Additionally, the
mathematics underpinning each is widely understandable by the relevant
parties, including lawmakers, judges, advocacy groups, and the general
public.  

However, none of these measures truly discerns which districts are \enquote*{compact} and which are not. 
For each score, we can construct a 
mathematical counterexample for which
a human's intuition and the score's evaluation of a shape's
compactness differ.  A region which is roughly circular but has a jagged boundary 
may appear compact to a human's eye, but such a shape has a very poor Polsby-Popper score.  Similarly, a very long, thin rectangle appears non-compact to a person, but has a perfect convex hull score.  Additionally, these scores often do not agree.
The long, thin rectangle has a  very
poor Polsby-Popper score, and the ragged circle has an excellent convex hull score.  These issues are well-studied by political
scientists and mathematicians alike
\cite{polsby1991third,frolov1975shape,maceachren1985compact,barnes2018gerrymandering}.

In this paper, we propose a further critique of these measures, namely
\textit{sensitivity under the choice of map projection}.  Each of the
compactness scores named above is defined as a tool to evaluate
geometric shapes in the plane, but in reality we are interested in
analyzing shapes which sit on the surface of the planet Earth, which
is (roughly) spherical.  
When we analyze the geometric properties of a geographic region, we work 
with a \textit{projection} of the Earth onto a flat plane, such as a piece of 
paper or the screen of a computer.
Therefore, when a shape is assigned a compactness score,
it is implicitly done with respect to some choice of map projection.
We prove that this may have
serious consequences for the comparison of districts by these scores.  In
particular, we consider the Polsby-Popper, convex hull,
and Reock scores on the sphere, and demonstrate that for any choice of
map projection, there are two regions, $A$ and $B$, such that $A$ is
more compact than $B$ on the sphere but $B$ is more compact than $A$
when projected to the plane.  We prove our results in a theoretical context 
before evaluating the extent of this phenomenon empirically.  We find 
that with real-world examples of Congressional districts, the effect 
of the commonly-used \textit{Cartesian latitude-longitude} projection on the convex 
hull and Polsby-Popper scores is relatively minor, but the impact on 
Reock scores is quite dramatic, which may have serious implications 
for the use of this measure as a tool to evaluate geographic compactness.

\subsection{Organization}

For each of the compactness scores we analyze, our proof that no map
projection can preserve their order follows a similar recipe. We
first use the fact that any map projection which preserves an ordering
must preserve the \textit{maximizers} in that ordering.  In other words,
if there is some shape which a score says is \enquote{the most compact} on the sphere 
but the projection sends this to a shape in the plane which is \enquote{not the most compact}, then whatever 
shape \textit{does} get sent to the most compact shape in the plane leapfrogs 
the first shape in the induced ordering.  For all three of the scores we study, such a maximizer exists.

Using this observation, we can restrict our attention to those map
projections which preserve the maximizers in the induced ordering,
then argue that any projection in this restricted set must permute the
order of scores of some pair of regions.

\paragraph{Preliminaries}
We first introduce some definitions and results which we will use to prove our 
three main theorems.  Since spherical geometry differs from the more familiar planar geometry, 
we carefully describe a few properties of spherical lines and triangles to build some intuition 
in this domain.

\paragraph{Convex Hull}
For the convex hull score, we first 
show that any projection which preserves the maximizers of the convex hull score 
ordering must maintain certain geometric properties of shapes and line segments 
between the sphere and the plane.  Using this, we demonstrate that no 
map projection from the sphere to the plane can preserve these properties, and therefore 
no such convex hull score order preserving projection exists.

\paragraph{Reock}
For the Reock score, we follow a similar tack, first showing that any 
order-preserving map projection must also preserve some geometric properties 
and then demonstrating that such a map projection cannot exist.

\paragraph{Polsby-Popper}
To demonstrate that there is no projection which maintains the score ordering induced by the Polsby-Popper score,  we leverage the 
difference between the \textit{isoperimetric inequalities} on the sphere and in the plane, in that the inequality for the plane is scale invariant in that setting but not on the sphere, in order to find a pair of regions in the sphere, one more compact than the other, such that the less compact one is sent to a circle under the map projection.

\paragraph{Empirical Results}
We finally examine the impact of the Cartesian latitude-longitude map projection on the convex hull, Reock, and Polsby-Popper 
scores and the ordering of regions under these scores.  While the impacts of the projection on the convex hull and 
 Polsby-Popper scores and their orderings are not severe, the Reock score and the Reock score ordering both change dramatically 
 under the map projection.

\section{Preliminaries}\label{sec:prelims}

We begin by introducing some necessary observations, definitions, and terminology
which will be of use later.  
\subsection{Spherical Geometry}

In this section, we present some basic results about spherical geometry with the goal of proving Girard's Theorem, which states that the area of a triangle on the unit sphere is the sum of its interior angles minus $\pi$.  Readers familiar with this result should feel free to skip ahead.

We use $\R^2$ to denote the 
Euclidean plane with the usual way of measuring distances, 
$$d(x,y) = \sqrt{(x-y)^2};$$
similarly, $\R^3$ denotes Euclidean 3-space.  We use $\mbb{S}^2$ to denote the \textit{unit 2-sphere}, which can be 
thought of as the set of points in $\R^3$ at Euclidean distance one from the origin.  
 
In this paper, we only consider the sphere and the plane, and leave the consideration of other surfaces, measures, and metrics to future work.

\begin{definition}
{On the sphere, a \textbf{great circle} is the intersection of the sphere with a plane passing through the origin.} These are the circles on the sphere with radius equal to that of the sphere.  See \Cref{fig:sphereline} for an illustration.
\end{definition}

\begin{definition}

Lines in the plane and great circles on the sphere are called \textbf{geodesics}.  A \textbf{geodesic segment} is a line segment in the plane and an arc of a great circle on the sphere.
	
\end{definition}

{The idea of {geodesics} generalizes the notion of \enquote*{straight lines} in the plane to other settings.} One critical difference is that in the plane, there is a unique line passing through any two distinct points and a unique line segment joining them.  On the sphere, there will typically be a unique great circle and \textit{two} geodesic segments through a pair of points, with the exception of one case.

\begin{definition}
	A \textbf{triangle} in the plane or the sphere is defined by three distinct points and the shortest geodesics connecting each pair of points.
\end{definition}

{
\ifsmallfigs
\else
\begin{figure}[htb]
	\centering
	\includegraphics[width=.5\textwidth]{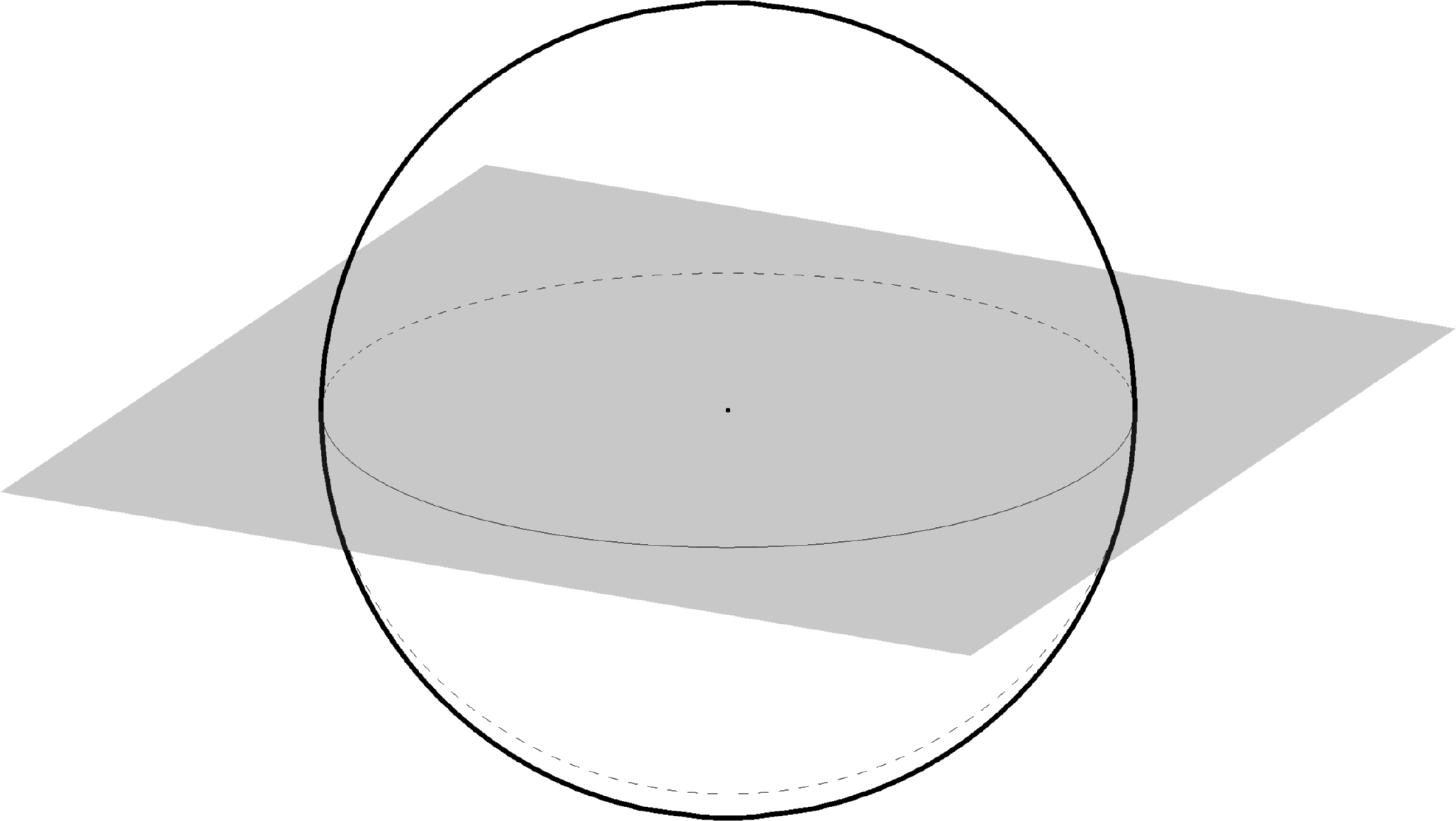}
	\caption{A great circle on the sphere with its identifying plane.}
	\label{fig:sphereline}
\end{figure}
\fi
}

\begin{observation}
	Given any two points $p$ and $q$ on the sphere which are not antipodal, meaning that our points aren't of the form $p=(x,y,z)$ and $q=(-x,-y,-z)$, there is a unique great circle through $p$ and $q$ and therefore two geodesic segments joining them. 
\end{observation}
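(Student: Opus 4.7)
The plan is to reduce the statement about great circles to a standard fact from linear algebra: three points in $\R^3$ determine a unique plane if and only if they are not collinear. By definition, a great circle through $p$ and $q$ corresponds bijectively to a plane through the origin containing both $p$ and $q$, so I just need to argue that the three points $\mathbf{0}, p, q$ determine a unique such plane, and then cut the resulting circle at $p$ and $q$.

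First I would observe that if $\Pi$ is any plane through the origin containing $p$ and $q$, then $\Pi$ must contain the linear span of $\{p, q\}$. If $p$ and $q$ are linearly independent in $\R^3$, then $\operatorname{span}\{p, q\}$ is itself a two-dimensional subspace, and so $\Pi = \operatorname{span}\{p, q\}$ is forced; uniqueness follows. Conversely, the plane $\operatorname{span}\{p, q\}$ passes through the origin and contains both points, so existence is immediate. It remains to check that the non-antipodal hypothesis is exactly what guarantees linear independence: if $p$ and $q$ were linearly dependent, then $q = \lambda p$ for some scalar $\lambda$; since $\|p\| = \|q\| = 1$ we would have $\lambda = \pm 1$, and since $p \ne q$ by assumption (they are distinct points), we would be forced into $\lambda = -1$, i.e., $q = -p$, which is the antipodal case excluded by hypothesis.

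Having established the unique great circle $C$ through $p$ and $q$, the two points $p$ and $q$ lie on $C$ and are distinct, so they partition $C \smallsetminus \{p, q\}$ into exactly two open arcs. Adjoining the endpoints produces exactly two arc segments of $C$ with endpoints $p$ and $q$, and by the definition of geodesic segment on the sphere these are precisely the geodesic segments joining $p$ and $q$.

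I do not anticipate a genuine obstacle here; the argument is essentially bookkeeping about linear spans in $\R^3$. The only subtle point is being careful to use both pieces of the hypothesis, namely that $p$ and $q$ are distinct (ruling out $q = p$) and that they are not antipodal (ruling out $q = -p$); together these are equivalent to linear independence of $p$ and $q$ in $\R^3$, which is what makes the spanning plane unique.
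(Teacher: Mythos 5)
Your argument is correct and is essentially the paper's own: the paper's (omitted from print) proof also identifies great circles with planes through the origin and uses the non-antipodal hypothesis to guarantee that $p$, $q$, and the origin determine a unique plane. You simply spell out the uniqueness via linear independence and spans, and add the (routine) observation that the two points cut the circle into exactly two arcs.
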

\mute{
\begin{proof}
	To see this, consider the characterization of great circles as the intersections of the surface of the sphere with planes through the center of the sphere.  Since $p$ and $q$ are not antipodes, they are not both collinear with $(0,0,0)$ and so these three points uniquely determine a plane.  This plane intersects the sphere along a great circle which contains $p$ and $q$.
\end{proof}
}
If $p$ and $q$ \textit{are} antipodal, then any great circle containing one must contain the other as well, so there are infinitely many such great circles. For any two non-antipodal points on the sphere, one of the geodesic segments will be shorter than the other.  {This shorter geodesic segment is the shortest path between the points and its length is the metric distance between $p$ and $q$}.

\ifsmallfigs
\begin{figure}
	\centering
	\begin{minipage}{.5\textwidth}
		\centering
	\includegraphics[height=.58\linewidth]{figs/sph-1pl.pdf}
		\captionof{figure}{A great circle on the sphere with its plane.}
		\label{fig:test1}
	\end{minipage}%
	\begin{minipage}{.5\textwidth}
		\centering
	\includegraphics[height=.58\linewidth]{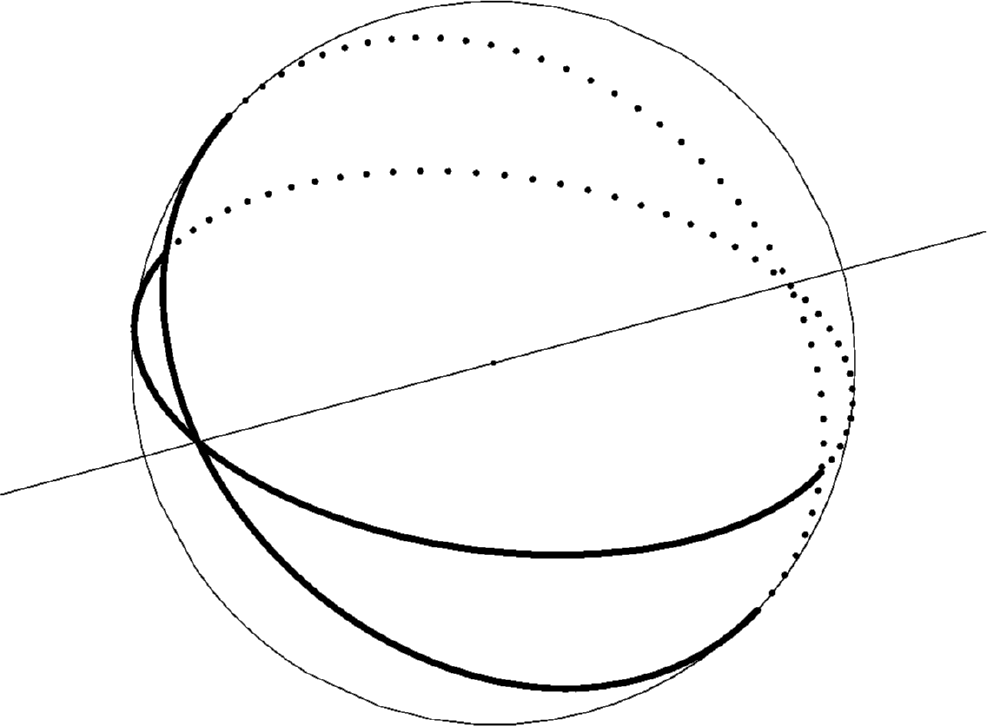}
		\captionof{figure}{Two great circles meet at antipodal points.}
	\label{fig:sphereline}
	\end{minipage}
\end{figure}
\fi

\ifsmallfigs
\else
\begin{figure}[htb]
	\centering
	\includegraphics[width=.35\textwidth]{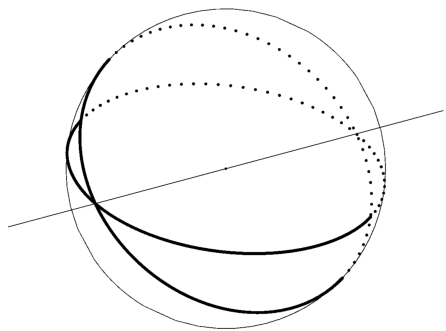}
	\caption{Two great circles meet at antipodal points.}
	\label{fig:2gc}
\end{figure}
\fi

We now have enough terminology to show a very important fact about spherical geometry.  This  observation is one of the salient features which distinguishes it from the more familiar planar geometry.

\begin{claim}
	Any pair of distinct great circles on the sphere intersect exactly twice, and the points of intersection are antipodes.
\end{claim}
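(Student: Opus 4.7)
The plan is to translate the claim about great circles on the sphere into a statement about planes in $\R^3$, using the characterization of great circles from the definition given earlier. A great circle is the intersection of $\mbb{S}^2$ with a plane through the origin, so a pair of distinct great circles corresponds to a pair of distinct planes through the origin, and the intersection of the circles is exactly the intersection of these planes intersected with $\mbb{S}^2$.

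First, I would argue that two distinct planes through the origin in $\R^3$ must intersect in a line through the origin. This is a standard linear algebra fact: two distinct two-dimensional subspaces of $\R^3$ cannot coincide and cannot be disjoint (both contain the origin), so by dimension counting their intersection is exactly one-dimensional, i.e., a line through the origin. I would state this briefly as a quick justification rather than a separate lemma.

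Next, I would intersect this line with the unit sphere. A line through the origin in $\R^3$ can be parameterized as $\{ t \mathbf{v} : t \in \R \}$ for some nonzero direction vector $\mathbf{v}$, and asking for $\|t\mathbf{v}\| = 1$ gives exactly two solutions, $t = \pm 1/\|\mathbf{v}\|$, yielding the two points $\pm \mathbf{v}/\|\mathbf{v}\|$. These two points are antipodal by construction. Hence the two great circles meet at exactly these two antipodal points.

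The proof is essentially bookkeeping, so I do not expect a substantive obstacle; the only place that requires a moment of care is confirming that the intersection of the two planes is truly a line (not a plane, which would force the circles to coincide, contradicting distinctness) and not just the origin (which is impossible since both subspaces have dimension two in an ambient space of dimension three). Once that is established, the antipodal conclusion is immediate from the symmetry of $\mbb{S}^2$ under $\mathbf{x} \mapsto -\mathbf{x}$.
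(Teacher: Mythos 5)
Your proposal is correct and follows essentially the same route as the paper's own argument: identify each great circle with its defining plane through the origin, note that two distinct such planes meet in a line through $(0,0,0)$, and observe that this line meets $\mbb{S}^2$ in exactly two antipodal points. Your added care about the dimension count and the explicit parameterization is fine but does not change the substance of the argument.
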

\mute{
\begin{proof}
	Any two distinct planes determine two distinct great circles, and these planes intersect along a line in $\R^3$ which passes through $(0,0,0)$ and therefore meets the sphere itself at exactly two points, which must be antipodes.
\end{proof}
}

Why is this weird? In the plane, it is always the case that any pair of distinct lines intersects exactly once or never, in which case we call them \textit{parallel}. Since distinct great circles on the sphere intersect exactly twice, there is no such thing as \enquote*{parallel lines} on the sphere, and we have to be careful about discussing `the' intersection of two great circles since they do not meet at a unique point.  Furthermore, it is not the case that there is a unique segment of a great circle connecting any two points; there are two, but unless our two points are antipodes, one of the two segments will be shorter.

\mute{We will use this distinction to derive a result which we will use critically in the main theorems of this paper, namely\\

\noindent\textbf{\Cref{lem:sphtri}.} [Girard's Theorem]
\emph{The sum of the interior angles of a spherical triangle is strictly greater than $\pi$.  More specifically, the sum of the interior angles is equal to $\pi$ plus the area of the triangle.}\\}

Another difference between spherical and planar geometry appears when computing the angles of triangles. In the planar setting, the sum of the interior angles of a triangle is always $\pi$, regardless of its area. However, in the spherical case we can construct a triangle with three right angles. The north pole and two points on the equator, one a quarter of the way around the sphere from the other, form such a triangle.  Its area is one eighth of the whole sphere, or $\tfrac{\pi}{2}$, which is, not coincidentally, equal to $\tfrac{\pi}{2}+\tfrac{\pi}{2}+\tfrac{\pi}{2} - \pi$. Girard's theorem, which we will prove below, connects the total angle to the area of a spherical triangle.

In order to  show Girard's Theorem, we need some way to translate between \textit{angles} and \textit{area}.  To do that, we'll use a shape which doesn't even exist in the plane: the \textit{diangle} or \textit{lune}.    We know that two great circles intersect at two antipodal points, and we can also see that they cut the surface of the sphere into four regions.  Consider one of these regions.  Its boundary is a pair of great circle segments which connect antipodal points and meet at some angle $\theta\leq \pi$ at both of these points.   

Using that the surface area of a unit sphere is $4\pi$, computing the area of a lune with angle $\theta$ is straightforward.

\begin{figure}[htb]
	\centering
	\includegraphics[width=.35\textwidth]{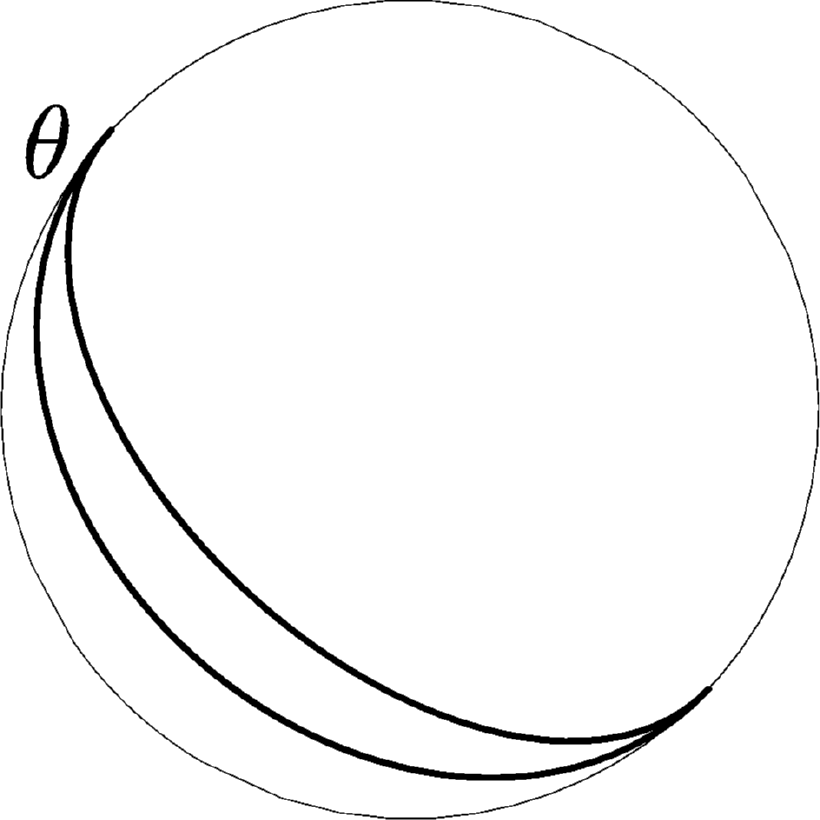}
	\caption{A lune corresponding to an angle $\theta$. }
	\label{fig:lune}
\end{figure}
		\begin{figure}[htb]
	\centering
	\includegraphics[width=.35\textwidth]{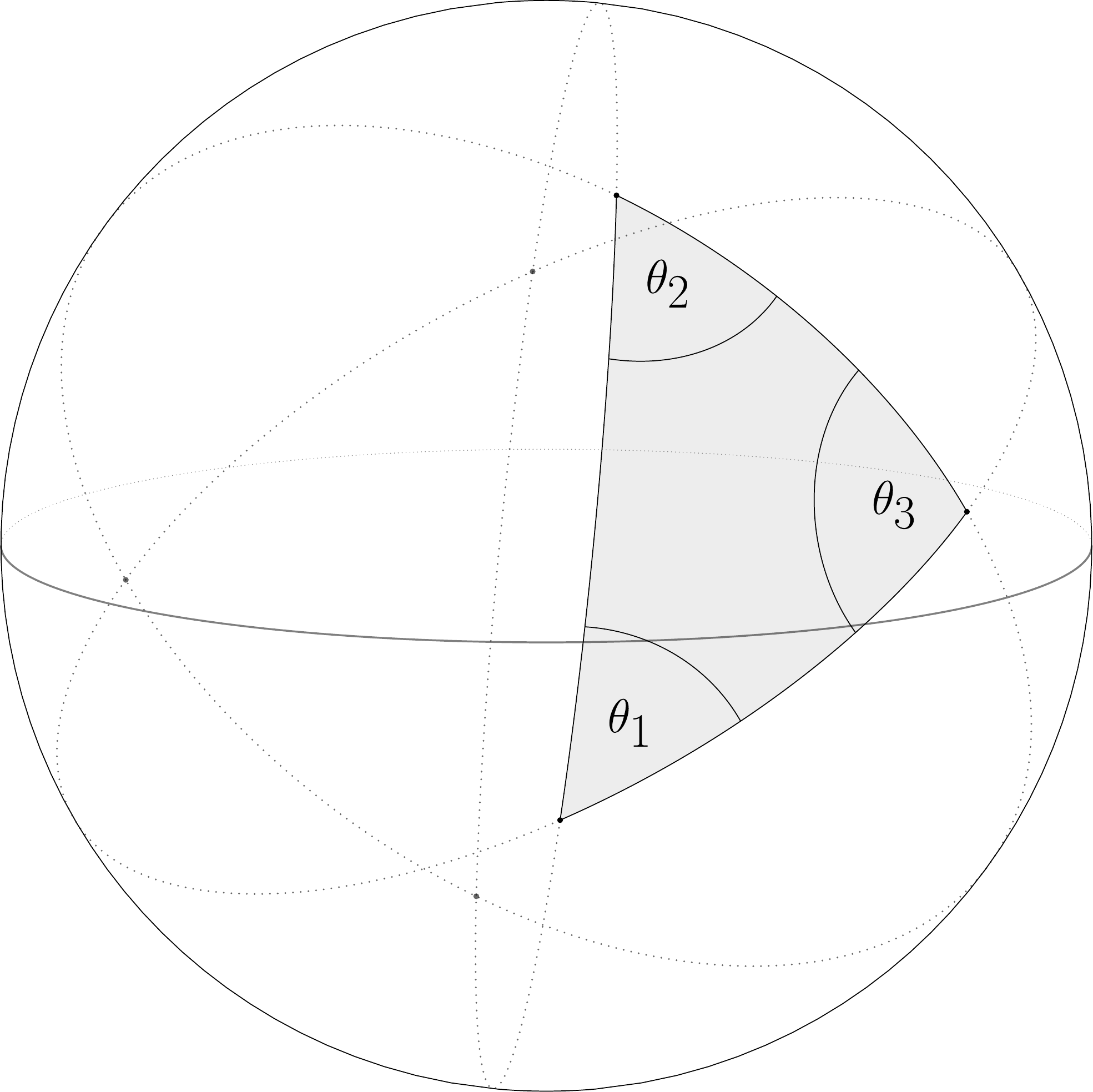}
	\caption{A spherical triangle and the antipodal triangle define six lunes.}
	\label{fig:trilune}
\end{figure}

\begin{claim}
	Consider a lune whose boundary segments meet at angle $\theta$.  Then the area of this lune is $2\theta$.
\end{claim}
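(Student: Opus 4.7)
The plan is to establish the area of a lune by a direct proportionality argument, using the rotational symmetry of the sphere about the axis through the two antipodal vertices of the lune.

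First I would fix the two great circles whose intersections are the antipodal cusp points of the lune, and note that we may rotate the sphere so those cusps are the north and south poles. Under this normalization, the two bounding great circles become meridians, and the lune of angle $\theta$ is exactly the set of points whose longitude lies in an arc of length $\theta$ on the equator. The key observation is that the area of such a lune depends only on $\theta$, by the rotational symmetry of the sphere about the polar axis; call this area $L(\theta)$.

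Next I would argue that $L$ is additive: two adjacent lunes sharing a meridian, with angles $\theta_1$ and $\theta_2$, glue (up to a measure-zero boundary) to a lune of angle $\theta_1+\theta_2$, so $L(\theta_1+\theta_2)=L(\theta_1)+L(\theta_2)$. Combined with monotonicity (a lune of larger angle contains a lune of smaller angle) this forces $L$ to be linear in $\theta$, i.e.\ $L(\theta)=c\theta$ for some constant $c$.

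Finally, I would pin down the constant by computing a known case: the full sphere corresponds to $\theta = 2\pi$, and the total surface area of the unit sphere is $4\pi$. Hence $L(2\pi) = 4\pi$, forcing $c = 2$ and $L(\theta) = 2\theta$, as claimed.

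I do not expect any serious obstacle here; the only subtlety is justifying that ``the area depends only on $\theta$'' and that the additivity/linearity step is valid. Both follow cleanly from the rotational invariance of the spherical surface-area measure, which the paper is implicitly using throughout this preliminary section. If the authors wanted to avoid invoking a linearity-from-additivity lemma, one could equally well write down the area as an explicit integral in spherical coordinates, $\int_0^\theta \int_0^\pi \sin\varphi \, d\varphi \, d\lambda = 2\theta$, but the symmetry argument is more in keeping with the expository tone of the preliminaries.
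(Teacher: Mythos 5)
Your argument is correct and is essentially the same as the paper's: the paper offers no written proof beyond remarking that the claim follows from the unit sphere having total area $4\pi$, i.e.\ exactly the proportionality-to-the-whole-sphere argument you spell out (a lune of angle $\theta$ is the fraction $\theta/2\pi$ of the sphere, giving area $2\theta$). Your additivity-plus-monotonicity justification of linearity, or the explicit integral $\int_0^\theta\int_0^\pi \sin\varphi\,d\varphi\,d\lambda = 2\theta$, simply makes rigorous the step the paper calls ``straightforward.''
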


Now that we have a tool that lets us relate angles and areas, we can prove Girard's Theorem.

\begin{lemma}(Girard's Theorem)\label{lem:sphtri}
	
	The sum of the interior angles of a spherical triangle is strictly greater than $\pi$.  More specifically, the sum of the interior angles is equal to $\pi$ plus the area of the triangle.
\end{lemma}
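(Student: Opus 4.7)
The plan is to prove Girard's Theorem by covering the sphere with six lunes whose combined area has two different expressions: one coming from the lune area formula just established, the other coming from tracking how many times each region of a great-circle decomposition is overcounted.

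First, I would let $T$ be a spherical triangle with vertices $A$, $B$, $C$ and interior angles $\alpha$, $\beta$, $\gamma$, and extend each side of $T$ to a complete great circle. By the earlier claim that any two distinct great circles meet at a pair of antipodes, these three great circles partition the sphere into eight spherical triangles; two of these are $T$ itself and its antipodal image $T'$ (whose vertices are the antipodes of $A$, $B$, $C$), and the antipodal symmetry of the sphere forces $\mathrm{Area}(T) = \mathrm{Area}(T')$.

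Next, at each vertex of $T$ the two great circles through that vertex bound a lune of opening angle $\alpha$ (respectively $\beta$, $\gamma$) that contains $T$; call these $L_A$, $L_B$, $L_C$. Let $L_A'$, $L_B'$, $L_C'$ denote their antipodal lunes, which contain $T'$ and have the same opening angles. By the preceding claim, the total area of these six lunes is $2(2\alpha + 2\beta + 2\gamma) = 4(\alpha + \beta + \gamma)$. The key combinatorial observation is that these six lunes cover the whole sphere, with $T$ covered exactly three times (once by each of $L_A$, $L_B$, $L_C$), $T'$ covered exactly three times (once by each of $L_A'$, $L_B'$, $L_C'$), and each of the six remaining regions in the eight-region decomposition covered exactly once. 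Granting this, summing lune areas and correcting for the double overcount of $T$ and of $T'$ gives
$$4(\alpha + \beta + \gamma) = 4\pi + 4\,\mathrm{Area}(T),$$
which rearranges to $\alpha + \beta + \gamma = \pi + \mathrm{Area}(T)$, and positivity of area yields the strict inequality.

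The main obstacle is verifying the covering multiplicities. I would handle this by noting that each of the six non-$T$, non-$T'$ regions is bounded by exactly one arc from each of the three great circles, so it is adjacent to either $T$ or $T'$ across a single side; a region adjacent to $T$ across the side opposite vertex $V \in \{A,B,C\}$ is bounded by the same two great circles as the lune $L_V$ and is precisely the other connected component of $L_V \setminus T$, so it lies in $L_V$ and in no other lune on the $T$-side. The symmetric statement with $T'$ in place of $T$ handles the primed lunes, and a brief case check confirms the count.
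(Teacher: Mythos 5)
Your proof is correct and follows essentially the same route as the paper's: extend the sides of $T$ to three great circles, use the lune-area formula on the six lunes containing $T$ and its antipodal triangle, and account for the overcounting of those two triangles to get $4(\alpha+\beta+\gamma)=4\pi+4\,\mathrm{Area}(T)$. The only difference is cosmetic bookkeeping — you track covering multiplicities (and verify them, which the paper leaves implicit), whereas the paper removes $T$ and $T'$ from two of their three lunes to get a disjoint cover — and both yield the identical identity.
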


\begin{proof}
	Consider a triangle $T$ on the sphere with angles $\theta_1$, $\theta_2$, and $\theta_3$.  Let $\mathrm{area}(T)$ denote the area of this triangle. If we extend the sides of the triangle to their entire great circles, each pair intersects at the vertices of $T$ as well as the three points antipodal to the vertices of $T$, and at the same angles at antipodal points.  This second triangle is congruent to $T$, so its area is also $\mathrm{area}(T)$.  Each pair of great circles cuts the sphere into four lunes, one which contains $T$, one which contains the antipodal triangle, and two which do not contain either triangle.  We are interested in the three pairs of lunes which do contain the triangles.  We will label these lunes by their angles, so we have a lune $L(\theta_1)$ and its antipodal lune $L'(\theta_1)$, and we can similarly define $L(\theta_2)$, $L'(\theta_2)$, $L(\theta_3)$, and $L'(\theta_3)$.

	We have six lunes.  In total, they cover the sphere, but share some overlap.  If we remove $T$ from two of the three which contain it and the antipodal triangle from two of the three which contain it, then we have six non-overlapping regions which cover the sphere, so the area of the sphere must be equal to the sum of the areas of these six regions.  \mute{We can write
	
	\begin{align*}
	4\pi &= \mathrm{area}(L(\theta_1)) + \mathrm{area}(L'(\theta_1)) \\
	&+  (\mathrm{area}(L(\theta_2)) - \mathrm{area}(T)) + (\mathrm{area}(L'(\theta_2)) - \mathrm{area}(T)) \\
	&+ (\mathrm{area}(L(\theta_3)) - \mathrm{area}(T))	 + (\mathrm{area}(L'(\theta_3)) - \mathrm{area}(T)).
	\end{align*}
}

	By the earlier claim, we know that the areas of the lunes are twice their angles, so we can write this as
	
	\ifarxiv
	\begin{align*}
	4\pi &= 2\theta_1 + 2\theta_1 
	+  (2\theta_2 - \mathrm{area}(T)) + (2\theta_2-\mathrm{area}(T))
	+ (2\theta_3 - \mathrm{area}(T))	 + (2\theta_3 - \mathrm{area}(T))
	\end{align*}
	\else
\begin{align*}
4\pi &= 2\theta_1 + 2\theta_1 
+  (2\theta_2 - \mathrm{area}(T)) + (2\theta_2-\mathrm{area}(T))\\
&\phantom{2\pi} + (2\theta_3 - \mathrm{area}(T))	 + (2\theta_3 - \mathrm{area}(T))
\end{align*}	
	\fi
	and rearrange to get

	\begin{align*}
	\theta_1+\theta_2+\theta_3 = \pi + \mathrm{area}(T),
	\end{align*}
	
	which is exactly the statement we wanted to show.
\end{proof}

We will need one more fact about spherical triangles before we conclude this section.  It follows immediately from the Spherical Law of Cosines.

\begin{fact}
	An equilateral triangle is equiangular, and vice versa, where \textit{equilateral} means that the three sides have equal length and \textit{equiangular} means that the three angles all have the same measure.
\end{fact}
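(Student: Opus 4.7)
The plan is to derive both directions of the equivalence as direct computations from the Spherical Law of Cosines. Recall the standard form (for a spherical triangle with sides $a,b,c$ and opposite angles $A,B,C$):
\[
\cos c = \cos a\cos b + \sin a\sin b\,\cos C,
\]
together with its dual form for angles,
\[
\cos C = -\cos A\cos B + \sin A\sin B\,\cos c.
\]
Both identities can be proved from Girard's theorem (just established) together with a vector-geometry argument on $\mbb{S}^2$, or cited as standard. For this statement we only need to substitute and simplify, noting throughout that sides lie in $(0,\pi)$ and angles lie in $(0,\pi)$, so cosine is injective and equality of cosines implies equality of the quantities themselves.

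For the forward direction (equilateral $\Rightarrow$ equiangular), I would set $a=b=c$ in the first identity, applying it once for each vertex of the triangle. Each instance reduces to
\[
\cos a = \cos^2 a + \sin^2 a\,\cos X,
\]
where $X$ is the corresponding opposite angle. Solving yields
\[
\cos X = \frac{\cos a}{1+\cos a},
\]
a value depending only on the common side length $a$. Since the same expression is obtained for $X=A$, $X=B$, and $X=C$, we conclude $A=B=C$.

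For the converse (equiangular $\Rightarrow$ equilateral), I would mirror the argument using the dual identity: setting $A=B=C=\alpha$ and applying the dual law of cosines once at each side yields
\[
\cos\alpha = -\cos^2\alpha + \sin^2\alpha\,\cos x
\]
for each side length $x\in\{a,b,c\}$, which rearranges to
\[
\cos x = \frac{\cos\alpha}{1-\cos\alpha}.
\]
Hence $\cos a=\cos b=\cos c$, so $a=b=c$.

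The only genuine obstacle is that the paper has not formally proved the dual (angle) form of the Spherical Law of Cosines. I would handle this either by invoking it as a standard companion identity, or by avoiding it entirely via the polar triangle construction: the polar $T^*$ of an equiangular triangle $T$ is equilateral (its sides equal $\pi$ minus the angles of $T$), so the forward direction applied to $T^*$ makes $T^*$ equiangular, and then the sides of $T$, being $\pi$ minus the angles of $T^*$, are all equal. Either route gives the converse cleanly.
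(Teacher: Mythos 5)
Your proof is correct and takes essentially the same route as the paper, which simply asserts that the fact ``follows immediately from the Spherical Law of Cosines''; your computation of $\cos X = \cos a/(1+\cos a)$ and the dual (or polar-triangle) argument for the converse supply exactly the details that assertion leaves implicit. Both directions check out, including the needed injectivity of cosine on $(0,\pi)$.
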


\mute{
\begin{proof}
	To see this, first suppose that we have a triangle with vertices $a$, $b$, and $c$ and suppose that the length of the side $ab$ is equal to that of $bc$, and let $m$ be the midpoint of the segment $bc$.  Consider the segment $am$. This splits the triangle $abc$ into two triangles $amb$ and $amc$.  These three triangles have the same side lengths and are therefore congruent.  By this congruence, the angle opposite vertex $a$ is equal to the angle opposite vertex $c$.  To show that either of these is also equal to the angle opposite vertex $b$, take $m$ to be the midpoint of segment $ac$.
	
	To see the converse, suppose that the angles at vertices $b$ and $c$ are equal.  Then the triangle $abc$ is congruent to the triangle $acb$ because they both share side $bc$ and the angle at vertex $a$, and the angle at vertex $b$ is equal to that at $c$.\footnote{This is sometimes called the \enquote{side-side-angle} congruence theorem.}  Since the triangles are congruent, the sides $ac$ and $ab$ have equal length.  To show that either of these is also equal to the length of side $bc$, consider angles $a$ and $c$ instead.
\end{proof}

}

An astute reader may notice that this result is also true of planar triangles, and the planar version follows from Propositions I.6 and I.8 in Euclid's \textit{Elements} \cite{elements,se_triangle}.  Since Euclid's proof doesn't rely on the existence of parallel lines, this fact can alternatively be shown using his argument.

\subsection{Some Definitions}

Now that we have the necessary tools of spherical geometry, we will wrap up this section with a battery of definitions. 
We carefully lay these out so
as to align with an intuitive understanding of the concepts and to
appease the astute reader who may be concerned with edge cases,
geometric weirdness, and nonmeasurability. 
Throughout, we implicitly consider all figures on the sphere to be strictly contained in a hemisphere.

\begin{definition}
	A \textbf{region} is a non-empty, open subset $\Omega$ of $\mathbb{S}^2$ or $\R^2$ such that $\Omega$ is bounded and its boundary is piecewise smooth.
\end{definition}

We choose this definition to ensure that the \textit{area} and \textit{perimeter} of the region are well-defined concepts.  This eliminates pathological examples of open sets whose boundaries have non-zero area or edge cases like considering the whole plane a \enquote*{region}.

\begin{definition}
A \textbf{compactness score function} $\mathcal{C}$ is a function from
the set of all regions to the non-negative real numbers or infinity.  We can compare 
the scores of any two regions, and we adopt the convention that 
\textit{more compact} regions have \textit{higher} scores.  That is,
region $A$ is at least as compact as region $B$ if and only if 
$\mathcal{C}(A)\geq    \mathcal{C}(B)$.
\end{definition}

The final major definition we need is that of a \textit{map
projection}.  In reality, the regions we are interested in comparing
sit on the surface of the Earth (i.e. a sphere), but these regions are
often examined after being projected onto a flat sheet of paper or
computer screen, and so have been subject to such a projection.

\begin{definition}
  A \textbf{map projection} $\varphi$ is a 
  diffeomorphism from a region on the sphere to a region of the 
  plane. 
\end{definition}

We choose this definition, and particularly the term \textit{diffeomorphism}, to ensure that $\vphi$ is smooth, its inverse $\vphi^{-1}$ exists and is smooth, and both $\vphi$ and $\vphi^{-1}$ send regions in their domain to regions in their codomain.  Throughout, we use $\vphi$ to denote such a function from a region of the sphere 
to a region of the plane and $\vphi^{-1}$, to denote the inverse which is a function from a region of the plane back to a region of the sphere.

Since the image of a region under a map projection $\varphi$ is also
a region, we can examine the compactness score of that region both 
before and after applying $\varphi$, and this is the heart of the
problem we address in this paper.  We demonstrate, for several
examples of compactness scores $\mathcal{C}$, that the order
induced by $\mathcal{C}$ is different than the order induced by
$\mathcal{C}\circ\varphi$ for \textit{any} choice of map projection
$\varphi$.

\begin{definition}
  We say that a map projection $\vphi$ \textbf{preserves the  
  compactness score ordering} of a score $\mc{C}$ if for any regions 
  $\Omega,\Omega'$ in the domain of $\vphi$, $\mc{C}(\Omega)\ge \mc{C}(\Omega')$ 
  if and only if $\mc{C}(\vphi(\Omega)) \ge \mc{C}(\vphi(\Omega'))$ in the plane.
\end{definition}

   This is a weaker condition than simply preserving the raw compactness scores. 
   If there is some map projection which results in adding $.1$ to the score of each region, the raw scores are certainly not preserved, but the ordering of regions by their scores is. Additionally, $\vphi$ preserves a compactness score ordering 
  if and only if $\vphi^{-1}$ does.

\begin{definition}
  A 
  \textbf{cap} on the sphere  $\mbb{S}^2$ is a region on the sphere
 which can be described as all of the points on the sphere to one side of some plane 
 in $\R^3$.  A cap has a \textit{height}, which is the largest distance between this cutting plane and the cap, and a \textit{radius}, which is the radius of the circle formed by the intersection of the plane and the sphere.  See \Cref{fig:caphr} for an illustration.
\end{definition}

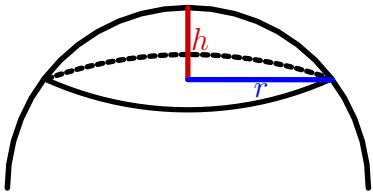
\begin{figure}[h]
  \centering
  \definecolor{qqqqff}{rgb}{0,0,1}

\definecolor{ccqqqq}{rgb}{0.8,0,0}

\definecolor{ududff}{rgb}{0.30196078431372547,0.30196078431372547,1}

\begin{tikzpicture}[scale=.6,line cap=round,line join=round,>=triangle 45,x=1cm,y=1cm]

\clip(-4.493355050909743,-1.2748355780287404) rectangle (4.462150807191927,4.785872976331056);

\draw [shift={(0,0)},line width=2.1pt]  plot[domain=0:3.141592653589793,variable=\t]({1*4*cos(\t r)+0*4*sin(\t r)},{0*4*cos(\t r)+1*4*sin(\t r)});

\draw [shift={(0.0016366799970421299,9.441278259356354)},line width=2pt]  plot[domain=4.286394541961201:5.138764290071874,variable=\t]({1*7.708897306730456*cos(\t r)+0*7.708897306730456*sin(\t r)},{0*7.708897306730456*cos(\t r)+1*7.708897306730456*sin(\t r)});

\draw [shift={(0.019848790872865698,-6.481795867514522)},line width=2pt,dotted]  plot[domain=1.22830273729039:1.9174384984929125,variable=\t]({1*9.441973016351305*cos(\t r)+0*9.441973016351305*sin(\t r)},{0*9.441973016351305*cos(\t r)+1*9.441973016351305*sin(\t r)});

\draw [line width=2pt,color=ccqqqq] (0,2.4033620491027756)-- (0,4);

\draw [line width=2pt,color=qqqqff] (0,2.4033620491027756)-- (3.2001911763834157,2.3997450770024993);

\begin{scriptsize}

\draw [fill=ududff] (0.0016366799970421299,9.441278259356354) circle (2.2pt);

\draw [fill=ududff] (0.019848790872865698,-6.481795867514522) circle (2.2pt);

\draw[color=ccqqqq] (0.26805571312169243,3.3034918797019284) node {\large$h$};

\draw[color=qqqqff] (1.6285727817769453,2.167711482419824) node {\large $r$};

\end{scriptsize}

\end{tikzpicture}
  \caption{ The height $h$ and radius $r$ of a spherical cap. }
  \label{fig:caphr}
\end{figure}

\section{Convex Hull}\label{sec:ch}
We first consider the \textit{convex hull
score}.  We briefly recall the definition of a convex set and then
define this score function.

\begin{definition}
	A set in $\R^2$ or $\mbb{S}^2$  is \textbf{convex} if every shortest geodesic segment between any two points in the set is entirely 
	contained within that set.
\end{definition}

\begin{definition}
  Let $\mathrm{conv}(\Omega)$ denote the \textit{convex hull} of
  a region $\Omega$ in either the sphere or the plane, which is the
  smallest convex region containing $\Omega$.  Then we define the
  \textit{convex hull score} of $\Omega$ as 
  \begin{align*}
    \mathrm{CH}(\Omega)=
    \frac{\mathrm{area}(\Omega)}{\mathrm{area}(\mathrm{conv}(\Omega)).}
  \end{align*}
  
  Since the intersection of convex sets is a convex set, there is a unique smallest (by containment) convex hull for any region $\Omega$.
\end{definition}

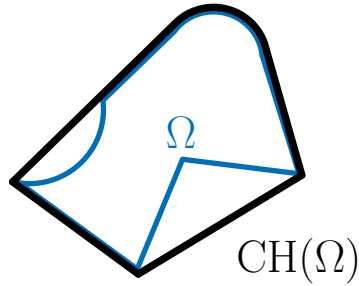
\begin{figure}[htb]
	\centering
	\definecolor{qqqqff}{rgb}{0,0,1}
\begin{tikzpicture}[scale=1,line cap=round,line join=round,>=triangle 45,x=1cm,y=1cm]
\clip(-3.1808855374298814,0.7054196919809391) rectangle (2.8436799719688346,4.78259146738453);
\draw [line width=2pt,color=NavyBlue] (-0.5403387932320529,4.199917026631047)-- (-1.4876311065851997,3.283901898015972);
\draw [line width=2pt,color=NavyBlue] (-2.59555742277687,2.1884264530340003)-- (-1.030093019846988,1.0427513872026393);
\draw [line width=2pt,color=NavyBlue] (-1.030093019846988,1.0427513872026393)-- (-0.4266458477678716,2.503268455857886);
\draw [line width=2pt,color=NavyBlue] (-0.4266458477678716,2.503268455857886)-- (1.1213273327829054,2.310865009687734);
\draw [line width=2pt,color=NavyBlue] (1.1213273327829054,2.310865009687734)-- (0.6315731061679702,3.8850750238071616);
\draw [shift={(-2.4623421270336228,3.1638741521373337)},line width=2pt,color=NavyBlue]  plot[domain=-1.7065250247888804:0.12252504290266225,variable=\t]({1*0.9845021730326186*cos(\t r)+0*0.9845021730326186*sin(\t r)},{0*0.9845021730326186*cos(\t r)+1*0.9845021730326186*sin(\t r)});
\draw [shift={(-0.01568921678809781,3.814254744830714)},line width=2pt,color=NavyBlue]  plot[domain=0.10898159374480773:2.5077053734556625,variable=\t]({1*0.6511252004282949*cos(\t r)+0*0.6511252004282949*sin(\t r)},{0*0.6511252004282949*cos(\t r)+1*0.6511252004282949*sin(\t r)});
\draw [line width=2.8pt] (-2.685655712545722,2.2031756923666004)-- (-1.0153159792750968,0.9768503185729869);
\draw [line width=2.8pt] (-1.0153159792750968,0.9768503185729869)-- (1.158637184087576,2.29171136125267);
\draw [line width=2.8pt] (1.158637184087576,2.29171136125267)-- (0.6620716009713541,4.04266375305702);
\draw [shift={(-0.01540854579167931,3.8076224532316205)},line width=2.8pt]  plot[domain=0.33394133978769747:2.3774678762164125,variable=\t]({1*0.7170939700497241*cos(\t r)+0*0.7170939700497241*sin(\t r)},{0*0.7170939700497241*cos(\t r)+1*0.7170939700497241*sin(\t r)});
\draw [line width=2.8pt] (-2.685655712545722,2.2031756923666004)-- (-0.5331419341269765,4.30378361705294);
\draw [color=NavyBlue](-0.7863560820341027,3.2000351515748306) node[anchor=north west] {\LARGE$\Omega$};
\draw (0.1549822788094467,1.5938421717062845) node[anchor=north west] {\LARGE$\mathrm{CH}(\Omega{)}$};
\end{tikzpicture}
	\caption{A region $\Omega$ and its convex hull.}
	\label{fig:ch_example}
\end{figure}

Suppose that our map projection $\vphi$ does  preserve the ordering of regions induced by the convex hull score.  We begin by observing that such a projection must preserve certain geometric properties of regions within its domain.
\begin{lemma}~\label{lem:CH_prep}
	Let $\vphi$ be a map projection from some region of the sphere to a region of the plane. If $\vphi$ preserves the convex hull compactness score ordering, then the following must 
	hold:
	\begin{enumerate}
		\item $\vphi$ and $\vphi^{-1}$ send convex regions in their domains to convex regions in their codomains.
		\item $\vphi$ sends every segment of a great circle in its domain to a line segment in its codomain.  That is, it preserves geodesics.\footnote{Such a projection is sometimes called a \textit{geodesic map}.}
		\item There exists a region $U$ in the domain of $\vphi$
		such that for any regions $A,B\subset U$, if 
		$A$ and $B$ have equal area on the sphere, then 
		$\vphi(A)$ and $\vphi(B)$ have equal area in the plane.  The same holds 
		for $\vphi^{-1}$ for all pairs of regions inside of $\vphi(U)$.
	\end{enumerate}
\end{lemma}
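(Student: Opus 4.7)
The plan is to handle the three claims in order, each time exploiting the ones before it.

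For part 1, I would use the observation that the convex hull score attains its maximum value of $1$ exactly on convex regions, since $\mathrm{conv}(\Omega)=\Omega$ iff $\Omega$ is convex. A map projection that preserves the ordering must send maximizers to maximizers: if $\Omega$ is a maximizer of $\mathrm{CH}$, then $\varphi(\Omega)$ must maximize $\mathrm{CH}$ on the codomain, since otherwise some $\varphi(\Omega')$ with strictly higher score would produce a pair violating the preserved ordering. Hence $\varphi$ carries convex spherical regions to convex planar regions; the symmetric application to $\varphi^{-1}$, which also preserves the ordering, gives the converse.

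For part 2, let $\gamma$ be a geodesic segment in the domain with endpoints $p,q$, and let $\alpha=\varphi(\gamma)$, which is a smooth curve in the plane from $\varphi(p)$ to $\varphi(q)$. I would show $\alpha=[\varphi(p),\varphi(q)]$ by a shrinking-neighborhood argument. Construct a nested family of convex spherical regions $\{N_\epsilon\}$ each containing $\gamma$ and with $\bigcap_\epsilon N_\epsilon = \gamma$ (for instance, narrow spherical lenses bounded by pairs of geodesic arcs sharing their endpoints with $\gamma$). By part 1, every $\varphi(N_\epsilon)$ is a convex planar region containing $\varphi(p)$ and $\varphi(q)$, and therefore contains the line segment $[\varphi(p),\varphi(q)]$. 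Because $\varphi$ is a diffeomorphism, $\bigcap_\epsilon \varphi(N_\epsilon)=\alpha$, so $\alpha$ contains the segment. A simple smooth curve between $\varphi(p)$ and $\varphi(q)$ that contains the segment $[\varphi(p),\varphi(q)]$ must equal it.

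For part 3, I would pick $U$ to be a convex region in the domain whose closure sits strictly inside the domain, and argue via a complementation trick. Given $A,B\subset U$ of equal spherical area whose closures lie in the interior of $U$, form the open regions $A'=U\setminus \overline{A}$ and $B'=U\setminus\overline{B}$. Each has piecewise-smooth boundary, and they have equal spherical area $\mathrm{area}(U)-\mathrm{area}(A)=\mathrm{area}(U)-\mathrm{area}(B)$. Since $\overline{A}$ and $\overline{B}$ lie strictly inside the convex region $U$, the convex hull of each of $A'$ and $B'$ is all of $U$, so $\mathrm{CH}(A')=\mathrm{CH}(B')$. By order preservation this equality persists after applying $\varphi$, and by part 1 the sets $\varphi(A'),\varphi(B')$ also have $\varphi(U)$ as their common convex hull. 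Thus the shared denominator $\mathrm{area}(\varphi(U))$ forces $\mathrm{area}(\varphi(A'))=\mathrm{area}(\varphi(B'))$, which immediately gives $\mathrm{area}(\varphi(A))=\mathrm{area}(\varphi(B))$. Replaying the argument for $\varphi^{-1}$ on $\varphi(U)$ yields the reverse direction.

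The main obstacle is the technical bookkeeping in part 3: one must verify that $A',B'$ qualify as regions in the precise sense of the paper (bounded, open, piecewise-smooth boundary), and that $\mathrm{conv}(A')=\mathrm{conv}(B')=U$ holds. Choosing $U$ with an interior cushion and restricting to $A,B$ whose closures are compactly contained in $U$ is what makes both conditions go through. Parts 1 and 2 are conceptually clean once one identifies the maximizers of $\mathrm{CH}$ as exactly the convex regions.
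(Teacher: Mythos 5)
Your parts (1) and (3) track the paper's proof almost exactly: the same maximizer argument for (1), and for (3) the same complementation trick $X=U\setminus A$, $Y=U\setminus B$ with $U$ (resp.\ $\varphi(U)$) as the common convex hull and the cancellation of $\mathrm{area}(\varphi(U))$; your extra care about closures and the ``cushion'' inside $U$ is a mild tightening of what the paper asserts. Part (2) is where you genuinely diverge. The paper argues by contradiction: it takes two convex spherical polygons $L,M$ sharing the geodesic segment $s$ as a side, and notes that if $\varphi(s)$ is not straight, a chord of $\varphi(s)$ exposes non-convexity of $\varphi(L)$ or $\varphi(M)$. You instead argue directly, squeezing $\gamma$ inside a nested family of convex regions, using convexity of each image to trap the chord $[\varphi(p),\varphi(q)]$, and then identifying the image arc with that chord (your arc-containment step is correct: the preimage of the chord under the arc's parametrization is compact, connected, and contains both endpoints of $[0,1]$). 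Your route avoids having to decide which of two images fails to be convex, at the cost of a little point-set bookkeeping, which you handle.

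One concrete slip in part (2): the suggested neighborhoods, ``narrow spherical lenses bounded by pairs of geodesic arcs sharing their endpoints with $\gamma$,'' do not exist. As the paper's preliminaries emphasize, two non-antipodal points lie on a \emph{unique} great circle, so any two geodesic arcs with the same endpoints as $\gamma$ are just the two complementary arcs of that single great circle, and they bound hemispheres rather than thin lenses (lunes only form at antipodal intersection points). This is easily repaired: use thin convex spherical quadrilaterals or triangles instead --- e.g.\ extend $\gamma$ slightly along its great circle and take small perpendicular offsets on either side, so that the open quadrilateral is convex, contains $\gamma$ in its interior (as it must, since regions are open), the family nests, and the intersection is exactly $\gamma$. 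With that substitution your argument for (2) goes through unchanged.
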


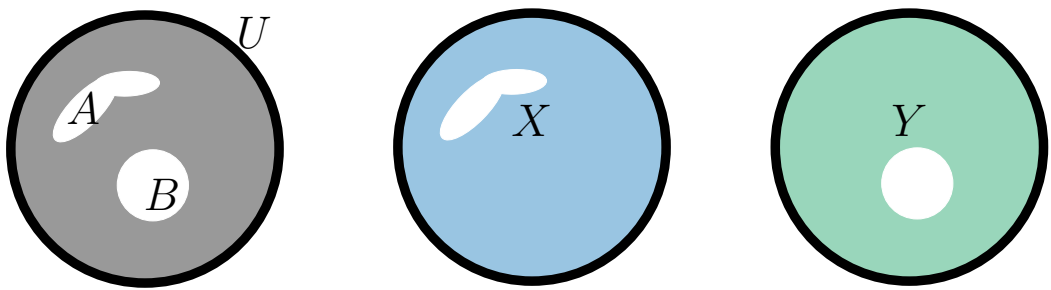
\begin{figure}[h]
	\centering
	\begin{minipage}{.3\textwidth}
			\centering
	\definecolor{ffffff}{rgb}{1,1,1}

\begin{tikzpicture}[scale=.35,line cap=round,line join=round,>=triangle 45,x=1cm,y=1cm]

\draw [line width=3.6pt,fill=black,fill opacity=0.4] (2.52,-1.24) circle (5.094349811310566cm);
\draw [line width=0.4pt,color=ffffff,fill=ffffff,fill opacity=1] (2.82,-2.62) circle (1.352035502492446cm);
\draw [rotate around={-134.1620339448981:(0.25635345523487,0.2782885419108182)},line width=0.4pt,color=ffffff,fill=ffffff,fill opacity=1] (0.25635345523487,0.2782885419108182) ellipse (1.6381871321999442cm and 0.5964014274522406cm);
\draw [rotate around={2.00955381302114:(1.84,1.24)},line width=0.4pt,color=ffffff,fill=ffffff,fill opacity=1] (1.84,1.24) ellipse (1.2329931677283157cm and 0.46805144125908144cm);
\draw (-0.84942211242749375,1.2652523257265278) node[anchor=north west] {\LARGE$A$};
\draw (2.089030655746763,-1.9839442053140297) node[anchor=north west] {\LARGE$B$};
\draw (5.613994458370557,4.161101819701755) node[anchor=north west] {\LARGE$U$};
\end{tikzpicture}
			\end{minipage}
	\begin{minipage}{.3\textwidth}
			\centering
	\definecolor{ffffff}{rgb}{1,1,1}

\begin{tikzpicture}[scale=.35,line cap=round,line join=round,>=triangle 45,x=1cm,y=1cm]

\draw [line width=3.6pt,fill=NavyBlue,fill opacity=0.4] (2.52,-1.24) circle (5.094349811310566cm);
\draw [rotate around={-134.1620339448981:(0.25635345523487,0.2782885419108182)},line width=0.4pt,color=ffffff,fill=ffffff,fill opacity=1] (0.25635345523487,0.2782885419108182) ellipse (1.6381871321999442cm and 0.5964014274522406cm);
\draw [rotate around={2.00955381302114:(1.84,1.24)},line width=0.4pt,color=ffffff,fill=ffffff,fill opacity=1] (1.84,1.24) ellipse (1.2329931677283157cm and 0.46805144125908144cm);
\draw (2.52,-0.24) node[anchor=center] {\LARGE$X$};
\end{tikzpicture}
\end{minipage}
	\begin{minipage}{.3\textwidth}
			\centering
	\definecolor{ffffff}{rgb}{1,1,1}

\begin{tikzpicture}[scale=.35,line cap=round,line join=round,>=triangle 45,x=1cm,y=1cm]

\draw [line width=3.6pt,fill=ForestGreen,fill opacity=0.4] (2.52,-1.24) circle (5.094349811310566cm);
\draw [line width=0.4pt,color=ffffff,fill=ffffff,fill opacity=1] (2.82,-2.62) circle (1.352035502492446cm);

\draw (2.52,-0.24) node[anchor=center] {\LARGE$Y$};
\end{tikzpicture}
\end{minipage}
	\caption{Two equal area regions $A$ and $B$ removed from $U$ to form the regions $X$ and $Y$.}
	\label{fig:ch_schema}
\end{figure}

\begin{proof}

		The proof of (1) follows from the idea that any projection which preserves the convex hull score ordering of regions must 
		preserve the maximizers in that ordering.   Here, the maximizers are convex sets.
		
		 To show (2) we suppose, for the sake of contradiction, that there is some geodesic segment $s$ in $U$ such that $\vphi(s)$ is not a line segment. Construct two convex spherical polygons $L$ and $M$ inside of $U$ which both have $s$ as a side.

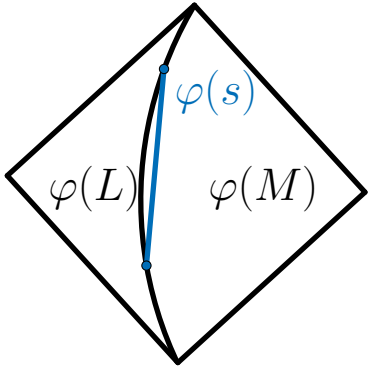
\begin{figure}[h]
	\centering
	\begin{tikzpicture}[scale=.7,line cap=round,line join=round,>=triangle 45,x=1cm,y=1cm]

\clip(-9.2605723047488,4.939165952768963) rectangle (5.35451137900598,13.039739619301336);
\draw [line width=2pt] (-2.0244628099173525,12.456528925619821)-- (-5.578181818181814,9.216859504132223);
\draw [line width=2pt] (-5.578181818181814,9.216859504132223)-- (-2.338512396694217,5.663140495867761);
\draw [line width=2pt] (-2.338512396694217,5.663140495867761)-- (1.2152066115702453,8.90280991735536);
\draw [line width=2pt] (1.2152066115702453,8.90280991735536)-- (-2.0244628099173525,12.456528925619821);
\draw [shift={(4.190413223140495,8.754049586776851)},line width=2pt]  plot[domain=2.604307939289022:3.5793780144036975,variable=\t]({1*7.234157681502106*cos(\t r)+0*7.234157681502106*sin(\t r)},{0*7.234157681502106*cos(\t r)+1*7.234157681502106*sin(\t r)});
\draw [line width=2pt,color=NavyBlue] (-2.601747508322699,11.243947166759868)-- (-2.9355456436406637,7.507555467344712);
\draw (-4.988752951588037,9.58982716176266) node[anchor=north west] {\LARGE$\varphi(L)$};
\draw (-2.588752951588037,11.42982716176266) node[color=NavyBlue,anchor=north west] {\LARGE$\varphi(s)$};

\draw (-1.9481756813548882,9.58982716176266) node[anchor=north west] {\LARGE$\varphi(M)$};
\begin{scriptsize}
\draw [fill=NavyBlue] (-2.601747508322699,11.243947166759868) circle (2.5pt);
\draw [fill=NavyBlue] (-2.9355456436406637,7.507555467344712) circle (2.5pt);
\end{scriptsize}
\end{tikzpicture}
	\caption{If $\vphi(s)$ is not a line segment, then one of $\vphi(M)$ or $\vphi(L)$ is not convex.}
	\label{fig:lineconvexcont}
\end{figure}

		 By (1), $\vphi$ must send both of these polygons to convex regions in the plane, but this is not the case.  All of the points along $\vphi(s)$ belong to both $\vphi(L)$ and $\vphi(M)$, but since $\vphi(s)$ is not a line segment, we can find two points along it which are joined by some line segment which contains points which only belong to $\vphi(L)$ or $\vphi(M)$, which means that at least one of these convex spherical polygons is sent to something non-convex in the plane, which contradicts our assumption.		See \Cref{fig:lineconvexcont} for an illustration.
		
		That $\vphi^{-1}$ sends line segments in the plane to great circle segments on the sphere is shown analogously.  
		This completes the proof of (2).

 To show (3), let $U$ be some convex region in the domain of $\vphi$.  Take $A,B$ to be regions of equal area such that $A$ and $B$ are properly contained in the interior of $U$, as in \Cref{fig:ch_schema}.  Define two new regions $X=U\ssm A$ and $Y=U\ssm B$, i.e. these regions are equal to $U$ with $A$ or $B$ deleted, respectively.  

The cap $U$ is itself the convex hull of both $X$ and $Y$, and since $A$ and $B$ have equal area, we have that $\mathrm{CH}(X) = \mathrm{CH}(Y)$.  Since $U$ is a cap, it is convex, so by (1), $\vphi(U)$ is also convex.  Since $\vphi$ preserves the ordering of convex hull scores and $X$ and $Y$ had equal scores on the sphere, $\vphi$ must send $X$ and $Y$ to regions in the plane which also have the same convex hull score as each other.  Furthermore, the convex hulls of $\vphi(X)$ and $\vphi(Y)$ are $\vphi(U)$.

By definition, we have
\begin{align*}
\mathrm{CH}(X) &= \mathrm{CH}(Y)\\
\end{align*}
and by the construction of $X$ and $Y$, we have 
\begin{align*}
\frac{\mathrm{area}(\vphi(U)) - \mathrm{area}(\vphi(A))}{\mathrm{area}(\vphi(U))} &= \frac{\mathrm{area}(\vphi(U)) - \mathrm{area}(\vphi(B))}{\mathrm{area}(\vphi(U))}\\
\mathrm{area}(\vphi(A)) &= \mathrm{area}(\vphi(B))
\end{align*}
 which is what we wanted to show.  The proof that $\vphi^{-1}$ also has this property is analogous.

\end{proof}

We can now show that no map projection can preserve the convex hull score ordering of regions by demonstrating that there is no projection from a patch on the sphere to the plane which has all three of the properties described  in \Cref{lem:CH_prep}.

\begin{theorem}
	There does not exist a map projection with the three properties in Lemma~\ref{lem:CH_prep}
\end{theorem}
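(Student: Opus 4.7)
The plan is to combine the three properties of Lemma~\ref{lem:CH_prep} and derive a quantitative contradiction from the fact that spherical area grows faster than quadratically in arc length (a manifestation of positive curvature), whereas planar area grows exactly quadratically. I proceed in three stages.

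First, I would upgrade property (3) from ``equal areas go to equal areas'' to ``area is scaled by a fixed constant.'' Property (3) guarantees that there is a well-defined function $f$ with $\mathrm{area}(\vphi(A)) = f(\mathrm{area}(A))$ for every $A \subseteq U$. Finite additivity of area on disjoint unions forces $f$ to satisfy $f(x+y)=f(x)+f(y)$; combined with monotonicity (or the continuity inherited from smoothness of $\vphi$), this forces $f(x)=kx$ for some constant $k>0$. So $\vphi$ scales area by the fixed factor $k$ on $U$.

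Second, I would show that $\vphi$ sends the spherical midpoint of any great-circle arc in $U$ to the Euclidean midpoint of its image segment. Given an arc $bc$ in the interior of $U$, I would enclose it as the base of an isosceles spherical triangle $abc$ still contained in $U$, and connect the spherical midpoint $m$ of $bc$ to $a$. By the reflective symmetry of the isosceles triangle, the sub-triangles $abm$ and $acm$ are spherically congruent, hence have equal spherical area. Applying (1) and (2), their $\vphi$-images are planar triangles sharing the side $\vphi(a)\vphi(m)$; by the constant area-scaling from step one they also have equal planar area, which forces $\vphi(m)$ to be the Euclidean midpoint of $\vphi(b)\vphi(c)$. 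Iterating this dyadically and using continuity of $\vphi$, I then obtain that $\vphi$ is affine in the arc-length parameterization along every geodesic arc in $U$: for each geodesic ray from a chosen point $N \in U$, there is a constant $c$ such that a spherical arc of length $d$ along that ray maps to a planar segment of length $c d$.

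Third, I would close with the contradiction. Fix $N \in U$ and two perpendicular geodesic rays at $N$ with per-direction scaling constants $c_u, c_v$ from step two. Let $\theta$ be the (nonzero) angle at $\vphi(N)$ between the two image lines, which is fixed because $\vphi$ is a diffeomorphism. For each small $d>0$, let $P_d, Q_d$ be the points at spherical distance $d$ from $N$ along these two rays. The planar image triangle $\vphi(N)\vphi(P_d)\vphi(Q_d)$ has legs $c_u d$ and $c_v d$ meeting at angle $\theta$, so its planar area is exactly $\tfrac{1}{2} c_u c_v \sin\theta \cdot d^2$. The right isosceles spherical triangle $N P_d Q_d$ has area $\tfrac{d^2}{2} + \tfrac{d^4}{12} + O(d^6)$, as follows from Girard's theorem (Lemma~\ref{lem:sphtri}) together with Napier's rules. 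Equating the planar area with $k$ times the spherical area and dividing by $d^2$ yields $\tfrac{1}{2} c_u c_v \sin\theta = \tfrac{k}{2} + \tfrac{k d^2}{12} + O(d^4)$, whose left-hand side is independent of $d$ while the right-hand side varies nontrivially with $d$, a contradiction.

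The main obstacle is the midpoint-to-affinity upgrade in step two: it requires a dyadic density argument together with the continuity of $\vphi$, and one must verify that every sufficiently short arc in the interior of $U$ can actually be enclosed in a suitable isosceles triangle still inside $U$. Once that is in hand, the final contradiction is essentially mechanical: it simply detects the second-order deviation of spherical area from pure $d^2$ behavior, which is exactly what positive curvature forbids a geodesic, area-scaling projection from absorbing.
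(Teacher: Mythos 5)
Your argument is correct, but it follows a genuinely different route from the paper's. The paper stays entirely synthetic and finite: it takes one equilateral spherical triangle $T$ together with two congruent neighbors $T_1,T_2$ sharing sides with $T$ and meeting at a vertex, uses properties (2) and (3) to show the images of $T\cup T_1$ and $T\cup T_2$ are parallelograms (equal-area triangles cut by the diagonals force the diagonals to bisect each other), and then derives the contradiction combinatorially: the image of $T\cup T_1\cup T_2$ is bounded by four line segments, while on the sphere its boundary consists of five geodesic segments, because by Girard's theorem the three angles at the common vertex each exceed $\pi/3$, so the two outer boundary arcs do not line up into one geodesic. Your proof instead extracts quantitative structure from the lemma --- the Cauchy-functional-equation upgrade of (3) to a fixed area-scaling constant $k$, the midpoint-preservation and dyadic/continuity argument giving affinity in arc length along geodesics --- and then detects curvature asymptotically, by comparing the exact $\tfrac12 c_u c_v\sin\theta\,d^2$ planar area with the spherical expansion $\tfrac{d^2}{2}+\tfrac{d^4}{12}+O(d^6)$ (which I checked is correct). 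Both proofs ultimately lean on Girard's theorem, but yours needs more analytic machinery (additivity, continuity, Taylor expansion) and in exchange essentially establishes a local quantitative statement in the spirit of Beltrami's theorem: no geodesic-preserving, area-scaling map off the sphere exists at any scale, with the failure visible at order $d^4$. The technical points you flag --- fitting the isosceles triangles inside $U$ and the dyadic-density-plus-continuity step --- are real but routine, and the nondegeneracy of $\theta$ (i.e.\ $\sin\theta\neq 0$) follows at once from injectivity/openness of the diffeomorphism, so there is no gap; the paper's version is simply shorter and avoids all limiting arguments by contradicting geodesic preservation on a single five-sided figure.
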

\begin{proof}
	Assume that such a map, $\vphi$, exists, and restrict 
	it to $U$ as above. Let $T\subset U$ be a 
	small equilateral spherical  triangle centered at 
	the center of $U$. Let $T_1$ and $T_2$ be two 
	congruent triangles meeting at a point and 
	each sharing a face with $T$, as in \Cref{fig:sphtris}.

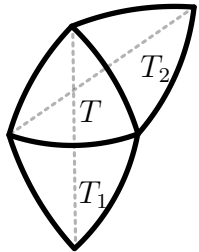
\begin{figure}[!htb]
	\centering
	\begin{tikzpicture}[scale=.25,line cap=round,line join=round,>=triangle 45,x=1cm,y=1cm]
\clip(-7.699047690559778,-7.571555155097503) rectangle (6.4348506610533125,7.774557106234026);
\draw [shift={(-1.6505275214910575,9.115278552365504)},line width=2pt]  plot[domain=4.3817706171001:5.059138387806136,variable=\t]({1*10.321001528275485*cos(\t r)+0*10.321001528275485*sin(\t r)},{0*10.321001528275485*cos(\t r)+1*10.321001528275485*sin(\t r)});
\draw [shift={(-10.803070399796216,-4.420666087580098)},line width=2pt]  plot[domain=0.2938195101807776:0.8077763069010403,variable=\t]({1*13.226440769305302*cos(\t r)+0*13.226440769305302*sin(\t r)},{0*13.226440769305302*cos(\t r)+1*13.226440769305302*sin(\t r)});
\draw [shift={(8.343628495049057,-4.490800515766344)},line width=2pt]  plot[domain=2.3753688206179033:2.8611250423804107,variable=\t]({1*13.887257489360788*cos(\t r)+0*13.887257489360788*sin(\t r)},{0*13.887257489360788*cos(\t r)+1*13.887257489360788*sin(\t r)});
\draw [shift={(-7.513162654544307,7.579158955824589)},line width=2pt]  plot[domain=5.566336287533996:6.169114070716076,variable=\t]({1*12.436992398719607*cos(\t r)+0*12.436992398719607*sin(\t r)},{0*12.436992398719607*cos(\t r)+1*12.436992398719607*sin(\t r)});
\draw [shift={(3.7978267211845975,-8.419128622621312)},line width=2pt]  plot[domain=1.4994026026391953:1.954080474540578,variable=\t]({1*14.614520953612525*cos(\t r)+0*14.614520953612525*sin(\t r)},{0*14.614520953612525*cos(\t r)+1*14.614520953612525*sin(\t r)});
\draw [shift={(10.390041391475169,4.172878076060955)},line width=2pt]  plot[domain=3.4450650540010272:3.881445927448308,variable=\t]({1*16.127936863518837*cos(\t r)+0*16.127936863518837*sin(\t r)},{0*16.127936863518837*cos(\t r)+1*16.127936863518837*sin(\t r)});
\draw [shift={(-9.037074369684714,1.4207160457879888)},line width=2pt]  plot[domain=5.463306638850501:6.100385463261517,variable=\t]({1*11.066136404446647*cos(\t r)+0*11.066136404446647*sin(\t r)},{0*11.066136404446647*cos(\t r)+1*11.066136404446647*sin(\t r)});
\draw [line width=1.2pt,dotted,opacity=.3] (4.8403256579539065,6.1581625753487135)-- (-5.001005706626561,-0.6467585545910133);
\draw [line width=1.2pt,dotted,opacity=.3] (-1.613022483529527,5.156840351541236)-- (-1.4587194302885766,-6.643272504200754);
\draw (-1.7868492266322078,1.7623586423065475) node[anchor=north west] {\large$T$};
\draw (-1.8511013622691343,-2.593740589657192) node[anchor=north west] {\large$T_1$};
\draw (1.458165010947301,4.1125782470569596) node[anchor=north west] {\large$T_2$};
\end{tikzpicture}
	\caption{The spherical regions $T,T_1,T_2$.}
	\label{fig:sphtris}
\end{figure}

We first argue that the images of $T\cup T_1$ and $T\cup T_2$ are parallelograms.

Without loss of generality, consider $T\cup T_1$.  By construction, it is a 
convex spherical quadrilateral. By symmetry, its geodesic 
diagonals on the sphere divide it into four triangles of equal area.  To see this, consider the geodesic segment which passes through the vertex of $T$ opposite the side shared with $T_1$ which divides $T$ into two smaller triangles of equal area.  Since $T$ is equilateral, this segment meets the shared side at a right angle at the midpoint, and the same is true for the area bisector of $T_1$.  Since both of these bisectors meet the shared side at a right angle and at the same point, together they form a single geodesic segment, the diagonal of the quadrilateral.  Since the diagonal cuts each of $T$ and $T_1$ in half, and $T$ and $T_1$ have the same area, the four triangles formed in this construction have the same area.

		Since $\vphi$ sends spherical geodesics to line segments in the plane, it must send 
		$T\cup T_1$ to a Euclidean quadrilateral $Q$ whose diagonals 
		are the images of the diagonals of the spherical quadrilateral $T\cup T_1$.
		
		 Since 
		$\vphi$ sends equal area regions to equal area 
		regions, it follows that the diagonals 
		of $Q$ split it into four equal area triangles.
		
		We now argue that this implies that $Q$ is a Euclidean parallelogram by showing that its diagonals bisect each other.  Since the four triangles 
		formed by the diagonals of $Q$ are all the same area, we can pick two of these triangles which share a side 
		and consider the larger triangle formed by their union.  One side of this triangle is a diagonal $d_1$ of $Q$ and its area is 
		bisected by the other diagonal $d_2$, which passes through $d_1$ and its opposite vertex.  The area bisector from a vertex, called the \textit{median}, passes through the midpoint of the side $d_1$, meaning that the diagonal $d_2$ bisects the diagonal $d_1$.  Since this holds for any choice of two adjacent triangles in $Q$, the diagonals must bisect each other, so $Q$ is a parallelogram.
		
		\begin{figure}[h]
			\centering
			\begin{tikzpicture}[scale=.6,line cap=round,line join=round,>=triangle 45,x=1cm,y=1cm]
\clip(-0.5194903865123804,-2.5120492887047456) rectangle (6.611417807606881,3.221586353864317);
\fill[opacity=0,line width=2pt] (2.7418465885478605,2.9045288146832235) -- (0.4115999626018961,0.44880737041701263) -- (1.56,-2.08) -- (5.58,-2.08) -- cycle;
\draw [line width=2pt] (2.7418465885478605,2.9045288146832235)-- (0.4115999626018961,0.44880737041701263);
\draw [line width=2pt] (0.4115999626018961,0.44880737041701263)-- (1.56,-2.08);
\draw [line width=2pt] (1.56,-2.08)-- (5.58,-2.08);
\draw [line width=2pt] (5.58,-2.08)-- (2.7418465885478605,2.9045288146832235);
\draw [line width=2pt] (4.163979603645924,0.4068967404389632)-- (0.4115999626018961,0.44880737041701263);
\draw [line width=2pt] (4.163979603645924,0.4068967404389632)-- (1.56,-2.08);
\draw [line width=1pt,dotted,opacity=.3] (2.7418465885478605,2.9045288146832235)-- (1.56,-2.08);
\draw [line width=1pt,dotted,opacity=.3] (5.58,-2.08)-- (0.4115999626018961,0.44880737041701263);
-\draw (2.978509173575852,-.8725949117012167) node[anchor=north west] {\large$\varphi(T_1)$};
-\draw (5.078509173575852,-.8725949117012167) node[anchor=north west] {\large$m_1$};
\draw (1.706327414776063,1.5774047550323834) node[anchor=north west] {\large$\varphi(T_2)$};
\draw (3.706327414776063,1.5774047550323834) node[anchor=north west] {\large$m_2$};
\draw (1.2021456159842953,-0.08849580106548587) node[anchor=north west] {\large$\varphi(T)$};
\draw (0.1021456159842953,-0.38849580106548587) node[anchor=north west] {\large$\ell$};
\end{tikzpicture}
			\caption{The image under $\vphi$ of $T,T_1,T_2$ which form the quadrilateral in the plane.}
			\label{fig:sphtris_pl}
		\end{figure}
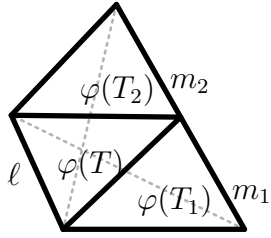

Since $T\cup T_1$ and $T\cup T_2$ are both spherical quadrilaterals which overlap on the spherical triangle $T$, the images of $T\cup T_1$ and $T\cup T_2$ are Euclidean parallelograms of equal area which overlap on a shared triangle $\vphi(T)$. See \Cref{fig:sphtris_pl} for an illustration.
  
Because  the segment $\ell$ is parallel to $m_1$ and $m_2$, $m_1$ and $m_2$ are parallel to each other, and because they meet at the point shared by all three triangles, $m_1$ and $m_2$ together form a single segment parallel to $\ell$.  Therefore, the image of the three triangles forms a quadrilateral in the plane.
	Therefore, the image of $T\cup T_1\cup T_2$ has a boundary consisting of 
	four line segments.
	
	To find the contradiction, consider the point on the sphere shared by $T$, $T_1$, and $T_2$.  Since these triangles are all equilateral spherical triangles, the three angles at this point are each strictly greater than $\tfrac{\pi}{3}$ radians, because the sum of interior angles on a triangle is strictly greater than $\pi$.  
	so, the total measure of the three angles at this point is greater than $\pi$,  Therefore, the two geodesic segments which are part of the boundaries of $T_1$ and $T_2$ meet at this point at an angle of measure strictly greater than 
	$\pi$. Therefore, together they do not form a single geodesic.  On the sphere, the region $T\cup T_1\cup T_2$ has a boundary consisting of five geodesic segments whereas its image has a boundary consisting of four, which contradicts the assumption that $\vphi$ and $\vphi^{-1}$ preserve geodesics.
\end{proof}

This implies that no map projection can preserve the ordering of regions by their convex hull scores, which is what we aimed to show.

\section{Reock}\label{sec:reock}

Let $\mathrm{circ}(\Omega)$ denote the \textit{smallest bounding
circle} (smallest bounding \textit{cap} on the sphere) of a region
$\Omega$.  Then the \textit{Reock score} of $\Omega$ is 

$$\mathrm{Reock}(\Omega)=
\frac{\mathrm{area}(\Omega)}{\mathrm{area}(\mathrm{circ}(\Omega))}.$$

We again consider what properties a map projection $\vphi$ must have in order to preserve the ordering of regions by their Reock scores.  

\begin{lemma}\label{lem:reock_prep}
  If $\vphi$ preserves the ordering of regions induced by their Reock scores, then the following must hold:
  \begin{enumerate}
    \item $\vphi$ sends spherical caps in its domain to Euclidean circles in the plane,  and $\vphi^{-1}$ does the opposite. 
    \item There exists a region $U$ in the domain of $\vphi$ such that for any regions $A,B\subset U$, if $A$ and $B$ have equal area on the sphere, then $\vphi(A)$ and $\vphi(B)$ have equal area in the plane.  The same holds for $\vphi^{-1}$ for all pairs of regions inside of $\vphi(U)$.
  \end{enumerate}
\end{lemma}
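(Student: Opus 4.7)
The plan is to mirror the structure of the proof of \Cref{lem:CH_prep}, adapting each step so that the maximizers and bounding regions are caps/disks rather than convex hulls.

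For part (1), I would invoke the same ``maximizers must map to maximizers'' principle used in the convex hull case. The Reock score takes values in $[0,1]$, and the only regions achieving $\mathrm{Reock}=1$ on the sphere are spherical caps (they are their own smallest bounding caps), while in the plane the only regions achieving $\mathrm{Reock}=1$ are Euclidean disks. Any order-preserving $\vphi$ must send score-$1$ regions to score-$1$ regions, so it sends caps to disks, and $\vphi^{-1}$ sends disks to caps.

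For part (2), I would imitate the removal construction from \Cref{lem:CH_prep}(3), but with ``smallest bounding cap'' playing the role of ``convex hull.'' Let $U$ be a spherical cap contained in the domain of $\vphi$, and let $A,B \subset U$ be regions of equal spherical area whose closures lie in the interior of $U$. Define $X = U \ssm A$ and $Y = U \ssm B$. Because $A$ and $B$ are strictly interior to $U$, the set $X$ contains points arbitrarily close to every boundary point of $U$, so the smallest cap containing $X$ is $U$ itself; the same holds for $Y$. Hence
\begin{align*}
\mathrm{Reock}(X) = \frac{\mathrm{area}(U)-\mathrm{area}(A)}{\mathrm{area}(U)} = \frac{\mathrm{area}(U)-\mathrm{area}(B)}{\mathrm{area}(U)} = \mathrm{Reock}(Y).
\end{align*}

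Now I would push this equality across $\vphi$. By part (1), $\vphi(U)$ is a Euclidean disk. Because $\vphi$ is a diffeomorphism it maps interior points to interior points, so the closures of $\vphi(A)$ and $\vphi(B)$ remain strictly inside $\vphi(U)$, and by the same reasoning as above the smallest bounding circles of $\vphi(X) = \vphi(U)\ssm\vphi(A)$ and $\vphi(Y) = \vphi(U)\ssm\vphi(B)$ are both $\vphi(U)$. Since $\vphi$ preserves the Reock ordering, $\mathrm{Reock}(\vphi(X)) = \mathrm{Reock}(\vphi(Y))$, which simplifies to $\mathrm{area}(\vphi(A)) = \mathrm{area}(\vphi(B))$, as desired. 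The analogous argument run with $\vphi^{-1}$ starting from a Euclidean disk inside $\vphi(U)$ gives the other direction.

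The only subtle point I expect is justifying that $\mathrm{circ}(X) = U$ after removing $A$; this is why I insist that $A$ and $B$ lie in the \emph{interior} of $U$, bounded away from $\partial U$, so that $X$ accumulates on all of $\partial U$ and no strictly smaller cap can contain it. Everything else is a direct translation of the convex hull proof with ``smallest bounding cap'' replacing ``convex hull.''
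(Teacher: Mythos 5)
Your proposal is correct and follows essentially the same argument as the paper: part (1) via the maximizer-preservation principle (caps and disks are the unique Reock-score maximizers), and part (2) via the same deletion construction $X=U\ssm A$, $Y=U\ssm B$ inside a cap whose image is a disk, with the score equality forcing $\mathrm{area}(\vphi(A))=\mathrm{area}(\vphi(B))$. Your added justification that $\mathrm{circ}(X)=U$ (because $X$ accumulates on all of $\partial U$) is a welcome bit of extra care that the paper leaves implicit, but it does not change the route.
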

\begin{proof}
  Similarly to the convex hull setting, the proof of (1) follows from
  the requirement that $\vphi$ preserves the maximizers in the
  compactness score ordering.  In the case of the Reock score, the
  maximizers are caps in the sphere and circles in the plane.  

  To show (2), let $\kappa$ be a cap in the domain of $\vphi$, and let 
  $A,B\subset \kappa$ be two regions of equal area properly 
  contained in the interior of $\kappa$. Then, define two new regions
  $X=\kappa\ssm A$ and $Y=\kappa\ssm B$, which can be thought of as
  $\kappa$ with $A$ and $B$ deleted, respectively. 

  Since $\kappa$ is the smallest bounding cap of $X$ and $Y$ and since
  $A$ and $B$ have equal areas, $\mathrm{Reock}(X)=\mathrm{Reock}(Y)$.
  Furthermore, by (1), $\vphi$ must send $\kappa$ to some circle in
  the plane, which is the smallest bounding circle of $\vphi(X)$ and
  $\vphi(Y)$.    
  Since $\vphi$ preserves the ordering of Reock scores, it must be
  that $\vphi(X)$ and $\vphi(Y)$ have identical Reock scores in the
  plane.

  By definition, we can write
  \begin{align*}
    \mathrm{Reock}(X) &= \mathrm{Reock}(Y)\\
    \frac{\mathrm{area}(\vphi(X))}{\mathrm{area}(\vphi(\kappa))} &= \frac{\mathrm{area}(\vphi(Y))}{\mathrm{area}(\vphi(\kappa))}
  \end{align*}
  and by the construction of $X$ and $Y$, we have 
  \begin{align*}
    \frac{\mathrm{area}(\vphi(\kappa))
    - \mathrm{area}(\vphi(A))}{\mathrm{area}(\vphi(\kappa))} &=
    \frac{\mathrm{area}(\vphi(\kappa))
    - \mathrm{area}(\vphi(B))}{\mathrm{area}(\vphi(\kappa))}\\
    \mathrm{area}(\vphi(A)) &= \mathrm{area}(\vphi(B)),
  \end{align*}
  \mute{  
  \begin{align*}
    \mathrm{Reock}(X) 
    = 1-\frac{\mathrm{area}(A)}{\mathrm{area}(\kappa)} 
    = 1-\frac{\mathrm{area}(B)}{\mathrm{area}(\kappa)}
    = \mathrm{Reock}(Y)
  \end{align*}
  Note that $\kappa$ is sent to a circle in the plane, so 
  the minimal bounding circle of $\vphi(\kappa)$ is itself. Thus, 
  since $\vphi$ preserves equality of Reock scores,
  \begin{align*}
    1-\frac{\mathrm{area}(\vphi(A))}{\mathrm{area}(\vphi(\kappa))} 
    = \mathrm{Reock}(\vphi(X)) 
    = \mathrm{Reock}(Y)
    = 1-\frac{\mathrm{area}(\vphi(B))}{\mathrm{area}(\vphi(\kappa))}
  \end{align*}

  }
  meaning that $\mathrm{area}(\vphi(A)) = \mathrm{area}(\vphi(B))$. 
  Thus, for all pairs of regions of the same area inside of $\kappa$,
  the images under $\vphi$ of those regions will have the same area as
  well.

  The same construction works in reverse, which demonstrates that
  $\vphi^{-1}$ also sends regions of equal area in some circle in the
  plane to regions of  equal area in the sphere.
\end{proof} 

We can now show that no such $\vphi$ exists.  Rather than constructing a figure on the sphere and examining its image under $\vphi$, it will be more convenient to construct a figure in the plane and reason about $\vphi^{-1}$.

\begin{theorem}\label{thm:reockbad}
  There does not exist a map projection with the two properties in
  \Cref{lem:reock_prep}.  
\end{theorem}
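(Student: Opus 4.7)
Plan. The approach is to combine the two constraints of \Cref{lem:reock_prep} to pin down the area behavior of $\vphi$ on caps, then build a three-cap configuration whose ``curvilinear triangle'' can be measured two ways and must violate the spherical law of cosines.

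First I upgrade Property (2) to a global statement: there exists a constant $c>0$ with $\mathrm{area}(\vphi(A)) = c\cdot \mathrm{area}(A)$ for every region $A\subset U$. Indeed, Property (2) says $\mathrm{area}(\vphi(A))$ depends only on $\mathrm{area}(A)$, defining a function $f$ with $\mathrm{area}(\vphi(A)) = f(\mathrm{area}(A))$. Decomposing a region as a disjoint union of two sub-regions forces $f$ to be additive, and smoothness of $\vphi$ makes $f$ continuous, so Cauchy's functional equation gives $f(a)=ca$. Combined with Property (1), this forces every spherical cap of radius $\rho$ sitting inside $U$ to map to a Euclidean disk whose radius depends only on $\rho$ via $\pi r^2 = 2\pi c(1-\cos\rho)$, i.e.\ $r = 2\sqrt{c}\,\sin(\rho/2)$, independent of the cap's center.

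Following the suggestion in the excerpt, I construct the configuration in the plane. Place three mutually tangent Euclidean disks $D_1,D_2,D_3$ of common radius $r$ inside $\vphi(U)$, whose centers form an equilateral triangle of side $2r$. By Property (1) their preimages $\kappa_i = \vphi^{-1}(D_i)$ are spherical caps, all of the same spherical radius $\rho$ (by the area formula above), and since $\vphi^{-1}$ is a homeomorphism preserving tangency, the caps are pairwise tangent; hence their centers form an equilateral spherical triangle $T$ of side $2\rho$ with common vertex angle $\alpha$. Let $R$ be the Euclidean curvilinear triangle bounded by the three arcs of the $\partial D_i$ between consecutive tangency points; then $\vphi^{-1}(R)$ is the analogous spherical curvilinear triangle bounded by the three corresponding arcs of the $\partial \kappa_i$.

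I then compute $\mathrm{area}(R)$ and $\mathrm{area}(\vphi^{-1}(R))$ by decomposing each into (equilateral triangle) minus (three circular wedges). The planar decomposition yields $\mathrm{area}(R) = \sqrt{3}\,r^2 - \pi r^2/2$. The spherical decomposition, using Girard's Theorem for the triangle ($\mathrm{area}(T) = 3\alpha-\pi$) and the rotational-symmetry formula $\alpha(1-\cos\rho)$ for a cap wedge of angle $\alpha$, yields $\mathrm{area}(\vphi^{-1}(R)) = 3\alpha\cos\rho - \pi$. Applying the constant-scaling identity $\mathrm{area}(R) = c\cdot \mathrm{area}(\vphi^{-1}(R))$ and substituting $r^2 = 2c(1-\cos\rho)$ rearranges to the prediction
\begin{align*}
\alpha \;=\; \frac{\pi}{3} + \frac{2}{\sqrt{3}}\bigl(\sec\rho - 1\bigr).
\end{align*}
On the other hand, the spherical law of cosines applied directly to $T$ gives $\cos\alpha = 1-\tfrac{1}{2}\sec^2\rho$. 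A short Taylor expansion in $\rho$ about $0$ shows that these two expressions for $\alpha$ agree at $\rho=0$ but first disagree at order $\rho^4$, so the two identities cannot both hold for any sufficiently small $\rho>0$, producing the contradiction.

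The main obstacle, I expect, is the first step: upgrading the ``equal-area is preserved'' hypothesis to the much stronger ``all areas scale by a constant'' conclusion. The key is that Property (2) is already a statement about a function of area alone, and additivity over disjoint pieces pins it down to a linear function. After that, the remainder is a routine decomposition and two explicit area formulas whose incompatibility is a direct computation.
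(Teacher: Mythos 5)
Your proposal is correct, but it takes a genuinely different route from the paper. The paper's proof pulls back a hexagonal packing of \emph{seven} equal tangent circles: by property (2) the preimage caps have equal area, hence equal radius, so their centers form six equilateral spherical triangles meeting at a point; Girard's Theorem forces each angle there to exceed $\tfrac{\pi}{3}$, so the angles around the central point would exceed $2\pi$ --- a purely qualitative contradiction needing no computation beyond the angle bound. You instead use only \emph{three} tangent circles, but pay for the smaller configuration with more machinery: you first upgrade property (2) to a global linear scaling $\mathrm{area}(\vphi(A)) = c\,\mathrm{area}(A)$ (the additivity-plus-monotonicity/Cauchy argument is fine, since disjoint unions of regions are regions under the paper's definition), then compute the curvilinear triangle's area two ways and play the resulting formula for the vertex angle $\alpha$ against the spherical law of cosines; I checked your formulas, including $\mathrm{area}(\vphi^{-1}(R)) = 3\alpha\cos\rho - \pi$, $\alpha = \tfrac{\pi}{3} + \tfrac{2}{\sqrt{3}}(\sec\rho - 1)$, $\cos\alpha = 1 - \tfrac12\sec^2\rho$, and the two predictions indeed first diverge at order $\rho^4$ (coefficients $-\tfrac{7}{24}$ versus $-\tfrac{1}{3}$ in $\cos\alpha$). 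One small point you should make explicit: the contradiction is obtained at the particular $\rho$ determined by your chosen disk radius $r$, so you must choose $r$ (hence $\rho$) small enough that the sector decomposition of the spherical curvilinear triangle is valid and that the nonvanishing $\rho^4$ term dominates; this is harmless since arbitrarily small disks fit inside $\vphi(U)$ and $\rho \to 0$ as $r \to 0$. The trade-off, then: the paper's argument is shorter and more elementary and uses only ``equal areas map to equal areas,'' while yours is quantitative --- it exhibits the obstruction as an explicit fourth-order discrepancy in the cap radius --- at the cost of the Cauchy-equation upgrade, the law of cosines, and a Taylor expansion.
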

\begin{proof}

  Assume that such a $\vphi$ does exist and restrict its domain to
  a cap $\kappa$ as above.  This corresponds to a restriction of the
  domain of $\vphi^{-1}$ to a circle in the plane.  Inside of this
  circle, draw seven smaller circles of equal area tangent to each
  other as in \Cref{fig:sevencircles}.

  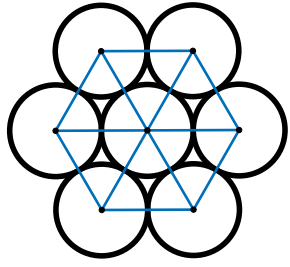
\begin{figure}[!htb]
    
    \centering
    \begin{tikzpicture}[scale=.25,line cap=round,line join=round,>=triangle 45,x=1cm,y=1cm]
\clip(-11.346776859504123,1.0846280991735546) rectangle (9.810247933884293,16.026776859504118);
\draw [line width=2.1pt] (-2.6746410161513783,12.872407940936126) circle (2.4400204917172323cm);
\draw [line width=2.1pt] (-5.097320508075692,8.636203970468065) circle (2.4400204917172315cm);
\draw [line width=2.1pt] (-2.64,4.42) circle (2.440020491717234cm);
\draw [line width=2.1pt] (2.24,4.44) circle (2.4400204917172372cm);
\draw [line width=2.1pt] (4.662679491924313,8.676203970468062) circle (2.440020491717239cm);
\draw [line width=2.1pt] (2.2053589838486225,12.892407940936126) circle (2.4400204917172372cm);
\draw [line width=2.1pt] (-0.21732050807568942,8.656203970468063) circle (2.4435284258377075cm);
\draw [line width=1pt,color=NavyBlue] (-2.6746410161513783,12.872407940936126)-- (-0.21732050807568942,8.656203970468063);
\draw [line width=1pt,color=NavyBlue] (-0.21732050807568942,8.656203970468063)-- (2.2053589838486225,12.892407940936126);
\draw [line width=1pt,color=NavyBlue] (2.2053589838486225,12.892407940936126)-- (-2.6746410161513783,12.872407940936126);
\draw [line width=1pt,color=NavyBlue] (-2.6746410161513783,12.872407940936126)-- (-5.097320508075692,8.636203970468065);
\draw [line width=1pt,color=NavyBlue] (-5.097320508075692,8.636203970468065)-- (-2.64,4.42);
\draw [line width=1pt,color=NavyBlue] (-2.64,4.42)-- (2.24,4.44);
\draw [line width=1pt,color=NavyBlue] (2.24,4.44)-- (4.662679491924313,8.676203970468062);
\draw [line width=1pt,color=NavyBlue] (4.662679491924313,8.676203970468062)-- (2.2053589838486225,12.892407940936126);
\draw [line width=1pt,color=NavyBlue] (4.662679491924313,8.676203970468062)-- (-0.21732050807568942,8.656203970468063);
\draw [line width=1pt,color=NavyBlue] (-0.21732050807568942,8.656203970468063)-- (2.24,4.44);
\draw [line width=1pt,color=NavyBlue] (-0.21732050807568942,8.656203970468063)-- (-2.64,4.42);
\draw [line width=1pt,color=NavyBlue] (-0.21732050807568942,8.656203970468063)-- (-5.097320508075692,8.636203970468065);
\begin{scriptsize}
\draw [fill=black] (-2.64,4.42) circle (4pt);
\draw [fill=black] (2.24,4.44) circle (4pt);
\draw [fill=black] (4.662679491924313,8.676203970468062) circle (4pt);
\draw [fill=black] (2.2053589838486225,12.892407940936126) circle (4pt);
\draw [fill=black] (-2.6746410161513783,12.872407940936126) circle (4pt);
\draw [fill=black] (-5.097320508075692,8.636203970468065) circle (4pt);
\draw [fill=black] (-0.21732050807568942,8.656203970468063) circle (4pt);
\end{scriptsize}
\end{tikzpicture}
    \caption{Seven circles arranged as in the construction for \Cref{thm:reockbad}.}
    \label{fig:sevencircles}
  \end{figure}	

  Under $\vphi^{-1}$, they must be sent to an similar configuration 
  of equal-area caps on the sphere .  

  However, the radius of a
  of a spherical cap is determined by its area, so since the areas of
  these caps are all the same, their radii must be as well. Thus, 
  the midpoints of these caps form six equilateral triangles on the sphere
  which meet at a point.  However, this is impossible, as the three 
  angles of an equilateral triangle on the sphere must all be greater
  than $\tfrac{\pi}{3}$, but the total measure of all the angles at
  a point must be equal to
  $2\pi$, which contradicts the assumption that such a $\vphi$ exists.
\end{proof}

This shows that no map projection exists which preserves the ordering of regions by their Reock scores.

\section{Polsby-Popper}\label{sec:pp}
The final compactness score we analyze is the \textit{Polsby-Popper
score}, which takes the form of an \textit{isoperimetric quotient},
meaning it measures how much area a region's perimeter encloses,
relative to all other regions with the same perimeter.

\begin{definition}\label{def:pp}
  The Polsby-Popper score of a region $\Omega$ is defined to be
  $$\mathrm{PP}(\Omega) = \frac{4\pi
  \cdot\mathrm{area}(\Omega)}{\mathrm{perim}(\Omega)^2}$$ 
in either the sphere or the plane, and
  $\mathrm{area}$ and $\mathrm{perim}$ are the area and perimeter of
    $\Omega$, respectively.
\end{definition}

The ancient Greeks were first to observe that if $\Omega$ is a region
in the plane, then $4\pi\cdot\mathrm{area}(\Omega)\leq
\mathrm{perim}(\Omega)^2$, with equality if and only if $\Omega$ is
a circle. This became known as the \textit{isoperimetric inequality} in
the plane.  This means that, in the plane, $0\le \mathrm{PP}(\Omega)\le 1$,
where the Polsby-Popper score is equal to $1$ only in the case of
a circle. We can observe that the Polsby-Popper score is scale-invariant in
the plane. 

An isoperimetric inequality for the sphere exists, and we
state it as the following lemma.  For a more detailed treatment of
isoperimetry in general, see \cite{osserman1979bonnesen}, and for
a proof of this inequality for the sphere, see \cite{rado}.

\begin{lemma}
  If $\Omega$ is a region on the sphere with area
  $A$ and perimeter $P$, then $P^2\geq 4\pi A - A^2$ with equality if
  and only if $\Omega$ is a spherical cap.
\end{lemma}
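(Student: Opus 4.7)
The plan is to prove the spherical isoperimetric inequality by the classical method of spherical (Schmidt) symmetrization, which is the direct analogue on $\mathbb{S}^2$ of Steiner symmetrization in the plane. The goal is to show that among all regions of fixed area on the unit sphere, the one of least perimeter is a spherical cap, whence both the inequality and the equality case follow immediately.

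First, I would fix a pole $p \in \mathbb{S}^2$ and define the symmetrization $\Omega^*$ of $\Omega$ with respect to $p$: for each parallel circle $C_\theta$ at polar distance $\theta$ from $p$, let $L(\theta)$ be the one-dimensional measure of $\Omega \cap C_\theta$, and replace $\Omega \cap C_\theta$ with the single arc of length $L(\theta)$ on $C_\theta$ centered on a fixed reference meridian. By Fubini on the sphere, $\Omega$ and $\Omega^*$ have the same area. I would then argue that $\mathrm{perim}(\Omega^*) \le \mathrm{perim}(\Omega)$ by a coarea-formula argument: write the perimeter as an integral over $\theta$ of a quantity bounded below by the number of transverse crossings of $\partial \Omega$ with $C_\theta$, and observe that the symmetric single-arc configuration minimizes this contribution at every level while preserving the slice length.

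Second, I would iterate the symmetrization over varying choices of pole, or extract a minimizing sequence and pass to a weak limit, and use the fact that a region fixed by spherical symmetrization about every pole through its centroid must be a cap. A standard lower-semicontinuity and compactness argument in $BV(\mathbb{S}^2)$ guarantees that the infimum of perimeter among sets of fixed area is attained, and combined with the symmetrization-based monotonicity, the minimizer must be a cap. Third, I would verify the equality case by direct computation. A spherical cap with polar half-angle $\theta$ on the unit sphere has
\[
A = 2\pi(1 - \cos\theta) \qquad \text{and} \qquad P = 2\pi \sin\theta,
\]
so
\[
P^2 = 4\pi^2(1 - \cos^2\theta) = \bigl(2\pi(1-\cos\theta)\bigr)\bigl(2\pi(1+\cos\theta)\bigr) = A(4\pi - A),
\]
which is exactly $P^2 = 4\pi A - A^2$. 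This shows that caps saturate the inequality, and by the minimization step above, a region achieves equality only if it is a cap.

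The main obstacle is justifying the perimeter inequality under symmetrization rigorously. On the sphere, the parallels $C_\theta$ have length $2\pi \sin\theta$, which varies nonlinearly with $\theta$, so the Euclidean Steiner argument does not transfer verbatim: one must track the spherical metric factors carefully when expressing $\mathrm{perim}(\Omega)$ via coarea. The cleanest route is to apply a rearrangement inequality on each parallel, verifying that collapsing multiple arcs on $C_\theta$ to a single symmetric arc of the same length can only reduce the integrand in the coarea expansion of the perimeter, regardless of how the spherical metric weights different latitudes. Handling the equality analysis — that equality in the perimeter bound at every level forces rotational symmetry about every axis — is the technical heart of pinning down the equality case.
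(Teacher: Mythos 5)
The paper does not actually prove this lemma: it is quoted as a known result, with a pointer to Osserman's survey for general isoperimetry and to Rad\'o for a proof of the spherical case. So any complete self-contained argument would go beyond what the paper does. Within your sketch, the part that is fully carried out --- the verification that a cap of polar angle $\theta$ has $A = 2\pi(1-\cos\theta)$, $P = 2\pi\sin\theta$, and hence $P^2 = A(4\pi - A) = 4\pi A - A^2$ --- is correct, and it establishes the ``if'' half of the equality statement.

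The rest, however, is a program rather than a proof, and the steps you defer are exactly the hard ones. (i) The claim that spherical (cap/Schmidt) symmetrization does not increase perimeter is not a soft coarea observation; counting transverse crossings of $\partial\Omega$ with the parallels $C_\theta$ bounds only the ``latitudinal'' part of the perimeter, and one must control the full length element $\sqrt{(d\theta)^2 + \sin^2\theta\,(d\phi)^2}$, typically via a convexity/rearrangement inequality adapted to the varying slice length $2\pi\sin\theta$; this needs to be written out, not asserted. (ii) The rigidity step is understated: a set invariant under your symmetrization about a single pole is only reflection-symmetric in a meridian plane, and the assertion that invariance ``about every pole through its centroid'' forces a cap is not an off-the-shelf fact in the form stated; the standard routes are two-point symmetrization (Baernstein--Taylor/Wolontis), regularity theory for isoperimetric minimizers, or Rad\'o's and Schmidt's original arguments. (iii) Equality in the perimeter bound under one symmetrization does not by itself imply the set was already symmetric, so the ``only if'' direction of the equality case --- which you yourself flag as the technical heart --- remains open in your write-up. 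In short, your approach is a legitimate classical route and genuinely different from the paper's (which simply cites the literature), but as it stands it has real gaps at the perimeter-monotonicity, existence, and rigidity steps; the most economical fix, matching the paper's level of rigor, is to cite Rad\'o or Schmidt for the inequality and keep your cap computation for the equality-attainment direction.
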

A consequence of this is that among all regions on the sphere with
a fixed area $A$, a spherical cap with area $A$ has the shortest
perimeter. However, the key difference between the Polsby-Popper score in the plane and on the sphere is that on the 
sphere, there is no scale invariance; two spherical caps of different sizes will have different scores.

\begin{lemma}\label{lem:ppscale}
  Let $S$ be the unit sphere, and let $\kappa(h)$ be a cap of height
  $h$.  Then $\mathrm{PP}(\kappa(h))$ is
  a monotonically increasing function of $h$.
\end{lemma}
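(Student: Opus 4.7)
The plan is to compute $\mathrm{PP}(\kappa(h))$ explicitly as a function of $h$ and then verify monotonicity directly. The key observation is that a cap achieves equality in the spherical isoperimetric inequality stated just above, so its Polsby-Popper score depends on its area in a particularly clean way.

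First I would record the two standard formulas for a cap of height $h$ on the unit sphere. Its area is $A(h) = 2\pi h$ (this follows, for example, from Archimedes' hat-box theorem, or from integrating over the sphere parametrized by height). Its boundary is a Euclidean circle of radius $r = \sqrt{1 - (1-h)^2} = \sqrt{2h - h^2}$, so its perimeter is $P(h) = 2\pi\sqrt{2h - h^2}$. Plugging these into \Cref{def:pp} gives
\begin{equation*}
\mathrm{PP}(\kappa(h)) \;=\; \frac{4\pi \cdot 2\pi h}{4\pi^2 (2h - h^2)} \;=\; \frac{2}{2 - h}.
\end{equation*}

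The function $h \mapsto \tfrac{2}{2-h}$ is manifestly strictly increasing on $h \in (0, 2)$, which is the full range of heights of caps contained in the sphere, so this would finish the proof. Alternatively, and perhaps more in keeping with the discussion preceding the lemma, one can derive the same formula by invoking the isoperimetric inequality: since caps are the equality case, $P(h)^2 = 4\pi A(h) - A(h)^2$, so
\begin{equation*}
\mathrm{PP}(\kappa(h)) \;=\; \frac{4\pi \, A(h)}{4\pi A(h) - A(h)^2} \;=\; \frac{4\pi}{4\pi - A(h)},
\end{equation*}
and this is a strictly increasing function of $A(h)$, which is in turn a strictly increasing function of $h$.

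There isn't really a hard step here: the content is just the explicit form of $A(h)$ and $P(h)$ and a one-line observation. The only subtlety worth flagging is the convention for \enquote*{perimeter} of a spherical region: the boundary curve of $\kappa(h)$ is a Euclidean circle in $\R^3$, and its length is the same whether computed as a planar circumference or as an intrinsic arc length along the sphere, so either interpretation gives the same $P(h)$, and hence the formula above is unambiguous.
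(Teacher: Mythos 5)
Your proposal is correct and follows essentially the same route as the paper: compute $r(h)^2 = 2h - h^2$ from the cap geometry, use the Archimedean area formula $\mathrm{area}(\kappa(h)) = 2\pi h$, and simplify to $\mathrm{PP}(\kappa(h)) = \tfrac{2}{2-h}$, which is visibly increasing. Your alternative derivation via the equality case of the spherical isoperimetric inequality is a nice, slightly slicker packaging of the same computation, but the content is identical.
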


\begin{proof}
  Let $r(h)$ be the radius of the circle bounding $\kappa(h)$. We
  compute: 
  \begin{align*}
    1 &= r(h)^2 + (1-h)^2 \text {, by right triangle trigonometry}\\ 
      &= r(h)^2 + 1 - 2h+h^2
  \end{align*}
  Rearranging, we get that $r(h)^2= 2h-h^2$, which we can plug in to
  the standard formula for perimeter:
  \begin{align*}
    \mathrm{perim}_S(\kappa(h)) = 2\pi r(h) = 2\pi \sqrt{2h-h^2}
  \end{align*}
  We can now use the Archimedian equal-area projection 
  defined by $(x,y,z) \to
  \left(\frac{x}{\sqrt{x^2+y^2}},\frac{y}{\sqrt{x^2+y^2}}, z\right)$ 
  to compute $\mathrm{area}_S(\kappa(h)) = 2\pi h$ and plug it in to 
  get:

  \begin{align*}
    \mathrm{PP}_S(\kappa(h)) = \frac{4\pi (2\pi h) }{4 \pi^2 (2h-h^2)}
    = \frac{2}{2-h}
  \end{align*}
  Which is a monotonically increasing function of $h$.
\end{proof}
\begin{corollary}\label{cor:capscale}
  On the sphere, Polsby-Popper scores of caps are monotonically
  increasing with area,
\end{corollary}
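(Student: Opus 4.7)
The plan is to deduce this directly from Lemma~\ref{lem:ppscale} together with the area formula already computed in its proof. The key observation is that the height $h$ of a spherical cap is in bijection with its area via a monotonically increasing function, so composing with the monotonically increasing function $h \mapsto \mathrm{PP}_S(\kappa(h))$ yields the desired monotonicity in area.

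More concretely, I would first recall from the proof of Lemma~\ref{lem:ppscale} that $\mathrm{area}_S(\kappa(h)) = 2\pi h$. Since this is a strictly increasing linear function of $h \in (0, 1]$ (restricting to caps contained in a hemisphere, as per our standing convention), the map $h \mapsto \mathrm{area}_S(\kappa(h))$ has a well-defined and strictly increasing inverse $A \mapsto h(A) = A/(2\pi)$. Next, I would invoke Lemma~\ref{lem:ppscale}, which tells us that $\mathrm{PP}_S(\kappa(h)) = \frac{2}{2-h}$ is strictly increasing in $h$. Composing, we can write the Polsby-Popper score of a cap as a function of its area $A$:
\begin{align*}
\mathrm{PP}_S(\kappa) = \frac{2}{2 - A/(2\pi)} = \frac{4\pi}{4\pi - A},
\end{align*}
which is manifestly strictly increasing in $A$.

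There is no real obstacle here: the corollary is a direct composition of two monotonic maps, both of which were explicitly established in the previous lemma. The only mild care needed is to ensure we stay in the regime where caps are well-defined in our setting (i.e. contained in a hemisphere, so $h \in (0,1]$), which matches the standing assumption from the Preliminaries section that all figures on the sphere are strictly contained in a hemisphere.
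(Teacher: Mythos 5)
Your proof is correct and is essentially the argument the paper intends: the corollary is stated without proof precisely because it follows immediately from Lemma~\ref{lem:ppscale} once one notes that $\mathrm{area}_S(\kappa(h)) = 2\pi h$ is itself strictly increasing in $h$, so the score as a function of area is the composition $A \mapsto \frac{4\pi}{4\pi - A}$ you derive. Nothing is missing; your explicit formula just makes the implicit composition visible.
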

Using this, we can show the main theorem of this section, that no map
projection from a region on the sphere to the plane can preserve the ordering
of Polsby-Popper scores for all regions.  

\begin{theorem}\label{thm:dpp}
  If $\varphi:U\to V$ is a map projection from the sphere to the plane,
  then there exist two regions $A,B\subset U$ such that
  the Polsby-Popper score of $B$ is greater than that of $A$ in the
  sphere, but the Polsby-Popper score of $\varphi(A)$ is greater than
  that of $\varphi(B)$ in the plane.
\end{theorem}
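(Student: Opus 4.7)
The plan exploits the key asymmetry between the Polsby-Popper scores on the sphere and in the plane: in the plane, $\mathrm{PP}\leq 1$ with equality precisely for disks, and this bound is scale-invariant; on the sphere, \Cref{cor:capscale} shows that the score of a cap grows monotonically with its area, exceeding $1$ by an amount controlled by that area. The strategy is to take $A$ to be a region whose image under $\varphi$ is a Euclidean disk, forcing $\mathrm{PP}(\varphi(A))=1$, and to take $B$ to be a spherical cap whose sphere-side score exceeds $\mathrm{PP}(A)$ but whose image is not a Euclidean disk.

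Concretely, I would let $D$ be a small Euclidean disk contained in $V$ and set $A=\varphi^{-1}(D)$, so that $\mathrm{PP}(\varphi(A))=1$. The spherical isoperimetric inequality gives $\mathrm{PP}(A)\leq \frac{4\pi}{4\pi-\mathrm{area}(A)}$, a bound that tends to $1$ as $D$ shrinks. Now pick a spherical cap $\kappa\subset U$ with $\mathrm{area}(\kappa)$ large enough that
$$\mathrm{PP}(\kappa)=\frac{4\pi}{4\pi-\mathrm{area}(\kappa)} \;>\; \frac{4\pi}{4\pi-\mathrm{area}(A)} \;\geq\; \mathrm{PP}(A),$$
which is possible by \Cref{cor:capscale}. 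If $\varphi(\kappa)$ is not a Euclidean disk, set $B=\kappa$: then $\mathrm{PP}(B)>\mathrm{PP}(A)$ on the sphere, while the planar isoperimetric inequality gives $\mathrm{PP}(\varphi(B))<1=\mathrm{PP}(\varphi(A))$ in the plane, as required.

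The remaining case is when $\varphi(\kappa)$ happens to already be a Euclidean disk. Here I would replace $B$ by a small, non-cap perturbation of $\kappa$: by continuity of area and perimeter under such perturbations, $\mathrm{PP}(B)$ can be kept arbitrarily close to $\mathrm{PP}(\kappa)$ and hence strictly greater than $\mathrm{PP}(A)$, while generically $\varphi(B)$ is no longer a disk, so $\mathrm{PP}(\varphi(B))<1$ again. The main obstacle is justifying this perturbation step cleanly. I expect it to follow from a simple dimension count, since the set of regions whose image under $\varphi$ is a Euclidean disk is at most a $3$-parameter family (matching the $3$ parameters of disks in the plane), whereas the space of boundary perturbations of $\kappa$ is infinite-dimensional, so a generic perturbation escapes this family. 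Alternatively, if every cap in $U$ were sent to a disk by $\varphi$, then by a classical characterization of circle-preserving diffeomorphisms $\varphi$ would essentially be stereographic projection; for such $\varphi$, any non-cap region on the sphere automatically maps to a non-disk in the plane, and one can simply take $B$ to be, say, a spherical region obtained by slightly deforming a large cap, verifying the inequalities directly.
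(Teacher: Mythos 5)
Your proposal is correct and takes essentially the same route as the paper's proof: let $A$ be the $\varphi$-preimage of a Euclidean disk, so that $\varphi(A)$ maximizes the planar Polsby-Popper score, and beat $A$ on the sphere with a cap of larger area using the spherical isoperimetric inequality together with \Cref{cor:capscale}. The only differences are bookkeeping: the paper secures $\mathrm{area}(A)<\mathrm{area}(B)$ by choosing the disk strictly inside $\varphi(B)$ rather than by comparing the explicit formula $\tfrac{4\pi}{4\pi-\mathrm{area}}$, and your extra perturbation step for the corner case in which $\varphi(\kappa)$ is itself a disk (most cleanly handled by noting that any region mapping to a disk is a $\varphi^{-1}$-image of a disk, hence has smooth boundary, so a region with a corner escapes that family) addresses a strict-inequality detail that the paper absorbs into the observation that maximizers of the score are not preserved.
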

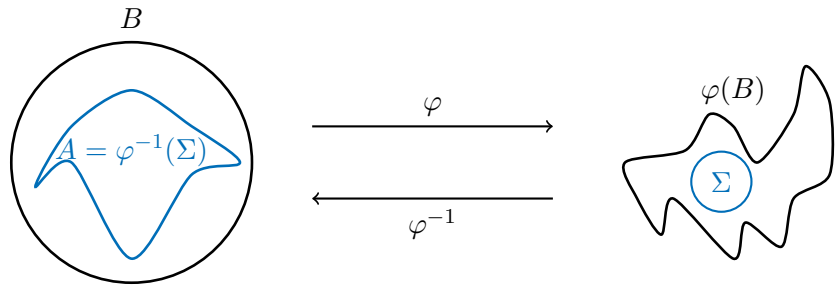
\begin{figure}[h]
  \centering
  \begin{tikzpicture}[scale=1.6]
    \draw[thick,->] (-1,0.3)-- (1,0.3)%
    node[midway,above] {$\vphi$};
    \draw[thick,->] (1,-0.3) -- (-1,-0.3)%
    node[midway,below] {$\vphi^{-1}$}; 
    \draw[line width=1, black] (-2.5,0) circle (1);

    \draw node at (-2.5,1.2) {\color{black} $B$};
    \draw[line width=1, black] plot [smooth cycle, tension=0.5]%
    coordinates { (2.5,-0.8) (2.6,-0.5) (2.9,-0.7) (3,-0.3) (3.3,-0.1)%
    (3.3,0.5) (3.1,0.8) (3,0.4) (2.7,0) (2.5,0.3) (2.3,0.4)%
    (2.1,0.1) (1.6,0) (1.7,-0.3) (1.9, -0.6) (2,-0.3)};
    \draw node at (2.5,0.6) {\color{black} $\vphi(B)$};

    \draw[thick, NavyBlue] (2.4,-0.16) circle (0.25);
    \draw node at (2.4,-0.16) {\color{NavyBlue} $\Sigma$};
    \draw[line width=1, NavyBlue] plot [smooth cycle, tension=0.5]%
    coordinates { (-2.5,-0.8) (-3, 0) (-3.3,-0.2) (-3,0.3) (-2.5,0.6)%
    (-2,0.3) (-1.6,0) (-2,-0.1)};
    \draw node at (-2.5,0.1) {\color{NavyBlue} $A=\vphi^{-1}(\Sigma)$};

  \end{tikzpicture}
  
  \caption{The construction of regions $A$ and $B$ in the
  proof of Theorem~\ref{thm:dpp}.} 
\label{fig:dpp}
\end{figure}

\begin{proof}
  Let $\varphi$ be a map projection, and let 
  $\kappa \subset U$ be some cap. We will take our regions 
  $A$ and $B$ to lie in $\kappa$. Set $B$ to be a cap 
  contained in $\kappa$. Let $\Sigma$ be a circle in 
  the plane such that $\Sigma
  \subsetneq \varphi(B)$ and let $A=\varphi^{-1}(\Sigma)$. See
  Figure~\ref{fig:dpp} for an illustration.

  We now use the isoperimetric inequality for the sphere 
  and Corollary~\ref{cor:capscale} to claim that 
  $A$ does not maximize the Polsby-Popper score in the sphere.

  To see this, take $\hat{A}$ to be a cap in the sphere with 
  area equal to that of $A$. Note that since the 
  area of $\hat {A}$ is less than the area of the cap $B$, it 
  follows that we can choose $\hat{A}\subset B$. 
  
  By the isoperimetric inequality of the sphere, 
  $\mathrm{PP}_S(\hat{A})\geq
  \mathrm{PP}_S(A)$. Since map projections preserve containment,
  $\Sigma\subsetneq \varphi(B)$ implies that $A\subsetneq B$, 
  meaning that $\mathrm{area}(\hat A) = \mathrm{area}(A)\lneq 
  \mathrm{area}(B)$. By Corollary~\ref{cor:capscale}, we know that
  $\mathrm{PP_S}(\hat{A})< \mathrm{PP_S}(B)$, and combining this with
  the earlier inequality, we get
  \begin{align*}
    \mathrm{PP_S}({A})\leq \mathrm{PP_S}(\hat{A})< \mathrm{PP_S}(B)
  \end{align*}

  Since $\Sigma = \varphi(A)$ maximizes the Polsby-Popper score in the
  plane, but $A$ does not do so in the sphere, we have shown that
  $\varphi$ does not preserve the maximal elements in the score
  ordering, and therefore it cannot preserve the ordering itself.
\end{proof}

The reason why every map projection fails to preserve the ordering of
Polsby-Popper scores is because the score itself is constructed from
the \textit{planar} notion of isoperimetry, and there is no reason to
expect this formula to move nicely back and forth between the sphere
and the plane.  This proof crucially exploits a scale invariance
present in the plane but not the sphere.  If we consider any circle in
the plane, its Polsby-Popper score is equal to one,
but that is not true of every cap in the sphere.  \mute{This naturally
raises the question of whether being more careful, and defining
a compactness score which uses the isoperimetric quotient of the
surface the region is actually in will evade this problem.  We show
later that it does resolve the issue of scale-noninvariance, but it is
still induces an ordering which is not preserved by any map
projection.  We discuss this further in \Cref{sec:generalz}.}

\section{Empirical Results}\label{sec:exper}

While we have shown mathematically that the ordering of compactness scores can be permuted by any map 
projection, the actual districts we wish to examine are relatively small compared to the surface of the Earth, 
and we might ask whether this order reversal actually occurs in practice.
We can extract the boundaries of the districts from a \textit{shapefile} from the U.S. Census Bureau\footnote{We use the U.S. Census Bureau's shapefile for the Congressional districts for the 115th Congress.} and compute the convex hull, Reock, and Polsby-Popper scores on the Earth and with respect to a common map projection\footnote{The code to compute the various compactness scores is based on Lee Hachadoorian's \textit{compactr} project. \cite{hachadoorian2018reock}} and examine the ordering of the districts with respect to both.

The projection we use is the familiar \textit{Cartesian latitude-longitude} projection, which is the default projection used in the cartographic data provided by the US Census Bureau.  
The projection presents latitudes and longitudes as equally-spaced horizontal and vertical lines, which causes small amounts of distortion near the equator, but large amounts outside of the tropics, with the areas of regions near the poles being dramatically inflated under the projection.  Following \cite{hachadoorian2018reock}, we treat a local Albers equal-area projection as the ground-truth value for the compactness scores on the sphere, as computing the spherical values of these scores is not a simple task, even in modern GIS software.  To validate this assumption, we compare the numerical value of the Albers score to several other local projections, including the universal transverse Mercator and the local state plane projections and find they all agree up to at least five decimal places, well within the margin of error introduced by the discretization of the geographic data, as discussed in \cite{barnes2018gerrymandering}.

We observe overall that the orderings of districts under the Polsby-Popper score and convex hull score are relatively undisturbed compared to that of the Reock score.  This is because the \textit{values} of those scores do not change by too much under different projections.  Intuitively, this is because while both projections distort shapes, they do so in a way that does not affect either of these scores by too much. In the case of the Polsby-Popper score, the perimeter and area of the regions are changed in similar ways, and in the case of the convex hull score, the area of the convex hull of a region is distorted in the same way as the region itself. This leads to a similarity in the ratio of the areas of the districts across projections.

For regions in North America, such as our Congressional districts, the differences in the minimal bounding circle between the spherical and Cartesian representation cause massive differences in both the raw values of the scores as well as the ordering.  The scores can change in either direction by upwards of $.1$ (recall that the value of the score is between zero and one) and there is almost no correlation in the ordering of the districts by their Reock scores on the sphere and under the Cartesian projection. 
\begin{figure}[h]
	\centering
	\includegraphics[width=.5\textwidth]{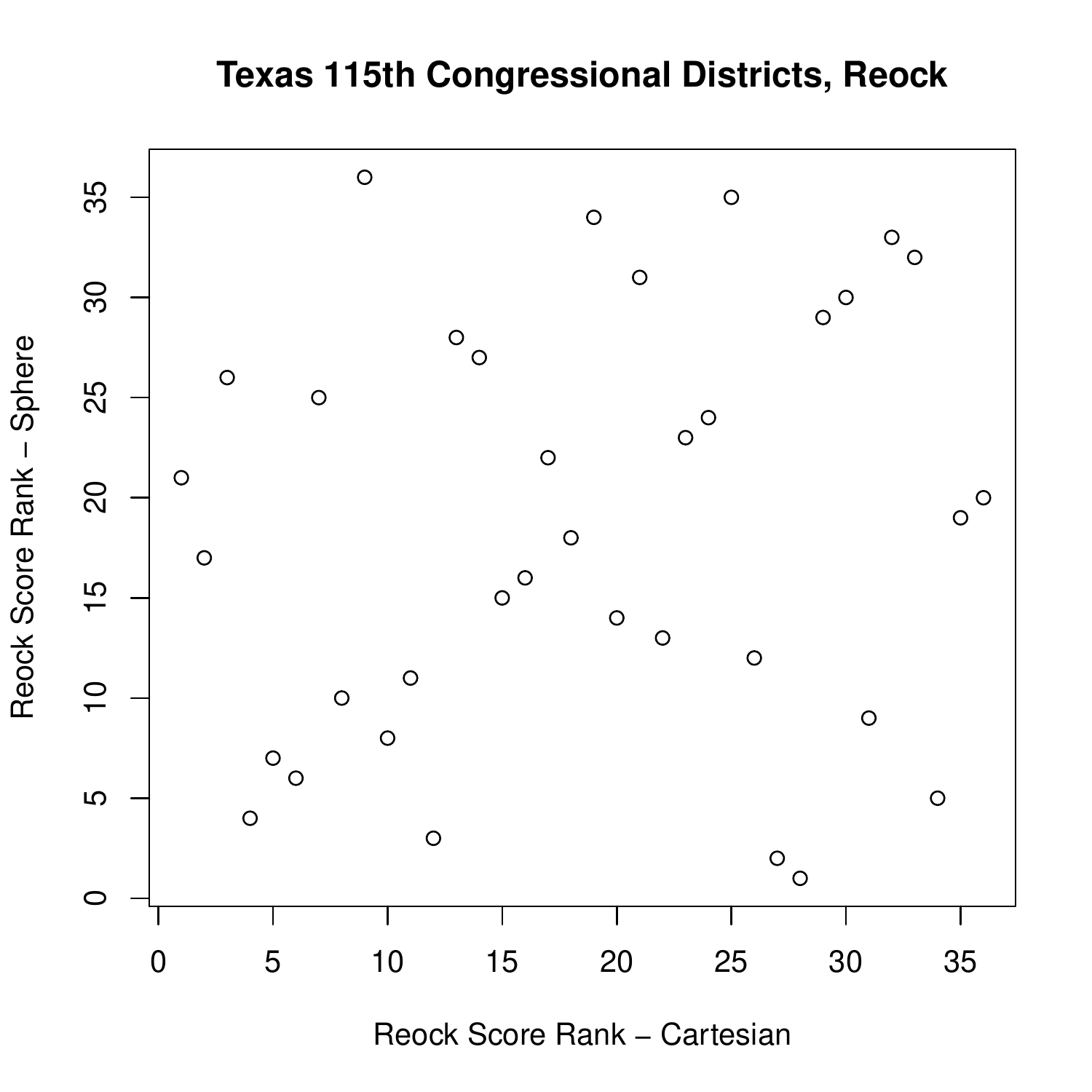}
	
	\caption{The permutation of the Reock score ordering for Texas' Congressional districts.}
	\label{fig:reock_exp_tx}
\end{figure}
In \Cref{fig:reock_exp_tx}, we plot the ordering of the 36 Congressional districts of Texas under the Reock score as computed on the sphere and in the plane after applying the Cartesian projection.  The rank on the sphere is along the horizontal axis and in the plane along the vertical axis.  Sweeping from left to right, one encounters the districts in order of least to most compact on the sphere and from bottom to top the least to the most compact in the plane.

A perfect preservation of the order would result in these points all falling on the diagonal.  However, what we see in practice is that many points do lie near the diagonal but several are very far away, indicating a strong disagreement between the Reock ordering on the sphere and in the plane.  This effect is not a result of some idiosyncrasy of the shapes of Texas' Congressional districts.  We find similar effects for all other states with at least a moderate number of districts.

This suggests that while the convex hull,  Polsby-Popper, and Reock measures all share a similar mathematical flaw, in applications to Congressional redistricting, the manifestation of the permutation of the ordering of districts by their Reock score is quite dramatic and could have real consequences for parties trying to use this score to assess the geometric compactness of a districting plan.  While we are unable to find any court cases where the Reock score was used crucially to determine whether or not a plan was an illegal gerrymander, the state laws of Michigan do use a similar measure in their definition of compactness, where one similarly constructs the smallest bounding circle of a district but then excludes from that circle any area falling outside of the state before taking the ratio.\footnote{Michigan Comp. Laws \S 3.63}  For districts whose bounding circle falls entirely within the state, this definition aligns exactly with the Reock score, and so is similarly susceptible to this permutation effect of map projections.

\section{Discussion}

We have identified a major \textit{mathematical} weakness in the commonly discussed compactness scores in that no map projection can preserve the ordering over regions induced by these scores.  This leads to several important considerations in the mathematical and popular examinations of the detection of gerrymandering.

From the mathematical perspective, rigorous definitions of compactness require more nuance than the simple score functions which assign a single real-number value to each district.  \textit{Multiscale} methods, such as those proposed in in \cite{deford2018tv}, assign a vector of numbers or a function to a region, rather than a single number.  The richer information contained in such constructions is less susceptible to perturbations of map projections.
Alternatively, we can look to capturing the geometric information of a district without having to work with respect to a particular spherical or planar representation.  So-called \textit{discrete compactness} methods, such as those proposed in \cite{duchin2018discrete}, extract a graph structure from the geography and are therefore unaffected by the choice of map projection, and our results suggest that this is an important advantage of these kinds of scores over traditional ones.  Finally, recent work has used lab experiments to discern what qualities of a region humans use to determine whether they believe a region is compact or not \cite{kingeyeball}.  Incorporating more qualitative techniques is important, especially in this setting where the social impacts of a particular districting plan may be hard to quantify.

We proved our non-preservation results for three particular compactness scores which appear frequently in the context of electoral redistricting.  There are countless other scores offered in legal codes and academic writing, such as definitions analogous to the Reock and convex hull scores which use different kinds of bounding regions, scores which measure the ratio of the area of the largest inscribed shape of some kind to the area of the district, and versions of these scores which replace the notion of \enquote*{area} with the population of that landmass.  Many of these and others suffer from similar flaws as the three scores we examined in this work.  It would be interesting to consider the most general version of this problem and enumerate a collection of properties such that any map projection permutes the score ordering of a pair of regions under a score with at least one of those properties. 

While compactness scores are not used critically in a \textit{legal} context, they appear frequently in the popular discourse about redistricting issues and frame the perception of the \enquote*{fairness} of a plan.  An Internet search for a term like `most gerrymandered districts' will invariably return results naming-and-shaming the districts with the most convoluted shapes rather than highlighting where more pleasant looking shapes resulted in unfair electoral outcomes. 

Similarly, a sizable amount of work towards remedying such abuses focuses primarily on the geometry rather than the politics of the problem. Popular press pieces (e.g.\ \cite{ingraham2014solve}) and academic research alike (e.g.\ \cite{voronoi, svec2007applying, levin_friedler_2019}) describe algorithmic approaches to redistricting which use geometric methods to generate districts with appealing shapes.  
However, these approaches ignore all of the social and political information which are critical to the analysis of whether a districting plan treats some group of people unfairly in some way. 
A purely geometric approach to drawing districts implicitly supposes that the mathematics used to evaluate the geometric features of districts are unbiased and unmanipulable and therefore can provide true insight into the fairness of electoral districts.  We proved here that the use of geographic compactness as a proxy for fairness is much less clear and rigid than some might expect.

This work opens several promising avenues for further investigation.  We prove strong results for the most common compactness scores, but the question remains what the most general mathematical result in this domain might be, such as giving a set of necessary and sufficient conditions for a compactness score to not induce a permuted order for some choice of map projection.

\ifarxiv
\subsection*{Acknowledgments}

 This work was partially completed while the authors were at the Voting Rights Data Institute in the summer of 2018, which was generously supported by the Amar G.\ Bose Grant. L.\ Najt was also supported by a grant from the NSF GEAR Network.

We would like to thank the participants of the Voting Rights Data Institute for many helpful discussions. Special thanks to Eduardo Chavez Heredia and Austin Eide for their help developing mathematical ideas in the early stages of this work. We would like to thank Lee Hachadoorian for inspiring the original research problem and Moon Duchin, Jeanne N. Clelland, and Anthony Pizzimenti for exceptionally helpful feedback on drafts of this work.  We thank Jeanne N. Clelland, Daryl DeFord, Yael Karshon, Marshall Mueller, Anja Randecker, and Caleb Stanford for offering their wisdom through helpful conversations throughout the process.

\else
\mute{
\begin{acknowledgment}{Acknowledgments}

\end{acknowledgment}
}
\fi

\bibliography{bib}



\end{document}